\documentclass[journal]{IEEEtran}
\usepackage{amsthm}
\usepackage{eucal}
\usepackage{dsfont}
\usepackage{times}
\usepackage{mathptmx}
\usepackage{tikz}
\usepackage{verbatim}
\usetikzlibrary{shapes,arrows}
\usepackage{subfigure}
\usepackage{enumerate}
\newtheorem{lemm}{Lemma}
\newtheorem{defin}{Definition}
\newtheorem{assump}{Assumption}

\newtheorem{theor}{Theorem}

\newtheorem{rem}{Remark}
\newtheorem{probl}{Problem}
\newtheorem{examp}{Example}
\usepackage{amssymb,amsfonts}
\usepackage{amsmath} 
\usepackage{algorithmic}
\usepackage{epstopdf}
\usepackage{graphicx}
\usepackage{textcomp}
\usepackage{xcolor}
\usepackage[numbers,sort&compress]{natbib}
\def\BibTeX{{\rm B\kern-.05em{\sc i\kern-.025em b}\kern-.08em
    T\kern-.1667em\lower.7ex\hbox{E}\kern-.125emX}}

\begin{document}
\title{Order-Adaptive Distributed Integral Control
\thanks{Fei~Chen is with the College of Artificial Intelligence, Nankai University, Tianjin 300350, China.} \thanks{Correspondence should be addressed to Fei Chen (e-mail: fchen@nankai.edu.cn).}
}

\author{Fei~Chen,~\IEEEmembership{Senior Member,~IEEE}}

\maketitle

\begin{abstract}
We address a structural tradeoff in distributed dynamic coordination: when
the target complexity is unknown, a low controller order saves states but may
leave a persistent tracking error, whereas a high order improves tracking but
may burden every agent with unnecessary dynamics. To remove this choice
without resorting to computationally more involved nonlinear feedback or
chattering-prone nonsmooth feedback, we develop an
order-adaptive distributed integral controller (OADIC). Specifically, we
start with proportional feedback and add integral states only when locally
measurable relative errors show that the current order is inadequate.
Meanwhile, we organize the candidate controllers in a nested form, thereby
preserving the existing states and gains and maintaining continuous control
inputs during order transitions. To provide a theoretical basis for this design,
we first characterize the consistency of prescribed relative displacements
on the augmented agent--target graph. Next, we construct gains that stabilize
all admissible fixed-order subsystems and establish a uniform input-to-state
stability bound for the variable-dimension closed loop. Furthermore, we prove
that OADIC rejects every insufficient order after finitely many
decision intervals and explicitly bound the rejection time of the critical
order.
\end{abstract}

\begin{IEEEkeywords}
Distributed dynamic coordination, order-adaptive control, distributed
integral control, internal model, multi-agent system.
\end{IEEEkeywords}

\section{Introduction}

How much internal dynamics should a distributed controller carry? The
internal model principle offers a partial answer: exact regulation
requires the controller to reproduce the dynamics that generate the signal to
be tracked or rejected \cite{francis1976internal}. Yet the principle does not
specify how to choose the internal-model order when the complexity of that
signal is unknown. This uncertainty therefore creates a structural tradeoff: an
order chosen too low leaves unmodeled dynamics, whereas the highest
anticipated order forces every agent to carry unnecessary integral states
whenever the actual motion is simpler. This tradeoff is particularly
restrictive in large-scale distributed networks because the controller
dynamics are replicated at every node: even one superfluous integral state
per agent creates network-wide storage, computation, and initialization
overheads that grow with the number of agents, whereas
underestimating the order leaves a persistent coordination error throughout
the network. Moreover, because each agent has access only to local relative
information, distinguishing an inadequate internal model from a slowly
decaying network transient is itself highly nontrivial.

The scalability issue mentioned above has motivated extensive efforts to reduce the
global information and model knowledge required by distributed controllers.
A common approach nevertheless assigns each agent a copy or an estimate of a
prescribed reference-signal model. This idea provides the basis for output
synchronization conditions \cite{wieland2011internal} and for distributed
observer--regulator architectures that coordinate heterogeneous agents
\cite{su2012cooperative}. More recently, fully distributed adaptive protocols
have removed the need for global network information while retaining
consensus guarantees \cite{li2013distributedadaptive}. Such advances make
distributed coordination more scalable and less dependent on global network
information. However, they still specify the reference-signal model and its
dimension before operation.

Rather than reconstructing an explicit reference-signal model, another line
of research embeds the target dynamics through integral or filter states in
the relative-feedback loop. Within this framework, distributed integral
action improves regulation and disturbance rejection
\cite{andreasson2014distributed}, whereas dynamic average consensus and the
closely related distributed average tracking methods enable networks to
follow time-varying signals
\cite{freeman2006dynamic,chen2020distributedaverage}. This interpretation is especially transparent for
polynomial targets: each additional integrator incorporates one more constant
derivative into the controller and thereby leads naturally to fixed
$PI^n$-type tracking schemes \cite{cheng2016pin}. Alongside these dynamic
compensation methods, recent adaptive and data-driven approaches have further
reduced the need for accurate agent models. In particular, output-feedback
adaptation estimates unknown agent parameters online \cite{guo2025mrac},
whereas data-driven synchronization constructs controllers directly from
measured trajectories \cite{fattore2026data}. Although these approaches
reduce prior model knowledge in different ways, they retain a controller
architecture whose state dimension is prescribed in advance. Consequently,
none of them determines during operation whether the observed tracking
performance calls for an additional integral state.

In parallel, nonlinear and nonsmooth control methods seek stronger transient
guarantees by reshaping the feedback law. Fixed-time control, for example,
provides a settling-time bound independent of the initial condition
\cite{polyakov2012fixed}, and its distributed extensions carry this property
over to networks of integrators \cite{parsegov2013fixed,chen2015boundedaccelerations}. Building on this
foundation, recent designs combine distributed observers with sliding-mode
feedback to achieve fixed-time formation tracking
\cite{zhou2025fixedtime}, while nonlinear output-feedback schemes enforce
prescribed tracking performance for uncertain high-order networks
\cite{katsoukis2026output}. These stronger performance guarantees, however,
come with a more involved controller structure. In particular, nonlinear
gains, fractional-power or sliding-mode terms, and additional observer states
must often be implemented at every agent, so their aggregate cost increases
with the network size. Moreover, nonsmooth terms may cause chattering and
require special solution definitions. Consequently, this line of research
improves tracking by increasing the complexity of the feedback law, but it
still leaves the internal-model dimension as an a priori design choice.

The above results leave a clear gap between low- and high-order linear
controllers. A low-order controller is inexpensive but may
leave a nonzero tracking error, whereas a high-order controller can track
richer motions but carries more states at every agent. A fixed-order design
must choose one of these two outcomes before operation and cannot move from
one to the other when the target motion changes. As a related alternative,
nonlinear or nonsmooth feedback can improve transient performance or
robustness, but it generally retains an a priori controller dimension and
introduces additional implementation complexity. Thus, the methods reviewed
above do not provide online order adjustment within a purely linear
distributed architecture.

To close this gap, we develop an order-adaptive distributed integral
controller (OADIC) for multi-agent networks with consistent
relative-displacement constraints. Rather than selecting a high fixed order
in advance or introducing nonlinear feedback, OADIC begins with the simplest
linear controller and increases its order only when the observed tracking
progress is insufficient. Importantly, at every active order, the control
input remains a purely linear combination of the coordination error and a
nested sequence of integral states. To determine whether this order remains
adequate, the agents then construct a distributed monitor from their local
relative errors. Accordingly, at each decision instant, OADIC retains the
current order if the monitor decreases sufficiently; otherwise, it activates
one additional integral state. 

The main contributions are summarized as follows.
\begin{enumerate}

    \item We develop OADIC as a linear distributed control framework
    that achieves dynamic coordination while adapting controller complexity
    to the observed tracking performance. To determine whether more
    controller states are needed, OADIC constructs a distributed monitor from
    local relative errors. If the monitor indicates that the current order is
    inadequate, OADIC activates one additional integral state. Because the
    controllers are nested, this activation preserves the existing states and
    gains while maintaining continuity of the control input.

    \item We give a constructive nested gain condition that simultaneously
    stabilizes every admissible fixed-order subsystem. By accounting for both
    continuous target forcing and the reset maps, we establish a uniform
    input-to-state stability bound for the complete variable-dimension
    dynamics. We also derive convergence estimates connecting the
    threshold schedule with the tracking performance of the final
    unsaturated order.

    \item 
    We prove that every insufficient order is rejected after finitely many decision intervals and
    obtain an explicit rejection-time bound for the critical order. This
    explains why OADIC can identify missing integral action even when the
    active fixed-order subsystem is itself stable.
\end{enumerate}

The remainder of the paper is organized as follows. Section~II introduces
the notation and auxiliary results. Section~III formulates the dynamic
coordination problem. Section~IV characterizes consistency of the prescribed
relative displacements. Section~V presents OADIC, its distributed monitor,
the switching rule, and the nested gain design. Section~VI establishes
stability and tracking properties, and Section~VII analyzes finite-time
detection of insufficient orders. Section~VIII provides a numerical
comparison with fixed-order controllers, and Section~IX concludes the paper.

\section{Preliminaries}

\label{sec:prel}

\subsection{Notation}
We use $\mathbb R$ and $\mathbb Z^+$ to denote the sets of real numbers and
positive integers, respectively, and we write $\mathbb R^n$ and
$\mathbb R^{n\times m}$ for the spaces of real $n$-vectors and $n\times m$
matrices. We use the superscript $T$ for transposition. For a vector $x$, we
use $\|x\|$ for the Euclidean norm; for a matrix $M$, we use the same notation
for the induced Euclidean norm.
We use $\mathbf 1_n$, $\mathbf 0_n$, $I_n$, and $0_{n\times m}$ for the
all-ones vector, the zero vector, the identity matrix, and the zero matrix of
the indicated dimensions. We omit a subscript when its dimension is clear
from the context. For a switching instant $t_s$ and a signal $\xi(t)$ with one-sided limits, we
write
$\xi(t_s^-)\triangleq\lim_{t\uparrow t_s}\xi(t)$ and
$\xi(t_s^+)\triangleq\lim_{t\downarrow t_s}\xi(t)$. 

For scalars $a_1,\ldots,a_n$, we use
$\operatorname{diag}(a_1,\ldots,a_n)$ for the diagonal matrix with these
entries and $\otimes$ for the Kronecker product. We define
$[a]_+\triangleq\max\{a,0\}$ and use $\lceil a\rceil$ for the smallest integer
not less than $a$. For a sufficiently differentiable signal $x(t)$, we write
$x^{(r)}(t)$ for its $r$th derivative and set $\dot x=x^{(1)}$. For a real symmetric matrix $M\in\mathbb R^{n\times n}$, we order its
eigenvalues as $\lambda_1(M)\leq\cdots\leq\lambda_n(M)$
 and denote the
corresponding orthonormal eigenvectors by
$z_1(M),\ldots,z_n(M)$.

We use the standard Landau notation for scalar- or vector-valued functions.
Suppose that $g(t)\neq0$ sufficiently close to a specified limit point. As
$t$ approaches that point, we write $f(t)=O(g(t))$ if there exist a constant
$C>0$ and a neighborhood of the limit point such that
$\|f(t)\|\leq C|g(t)|$. We write $f(t)=o(g(t))$ if
$\|f(t)\|/|g(t)|\to0$. Consequently, $f(t)=o(g(t))$ implies
$f(t)=O(g(t))$. We state the relevant limit, such as $t\to\infty$ or
$t\to0$, explicitly unless it is clear from the context.


\subsection{Auxiliary lemmas}

\begin{lemm}[\cite{chung1997spectral}]
    Let $M\in\mathbb R^{n\times n}$ be symmetric. Then, for every
    $x\in\mathbb R^n$,
    \begin{align*}
        \lambda_1(M)\|x\|^2\leq x^TMx
        \leq\lambda_n(M)\|x\|^2.
    \end{align*}
    Moreover, for $i=2,\ldots,n$,
    \begin{align*}
        \lambda_i(M)
        =\min_{\substack{x\neq0,\;
        x\perp z_1(M),\ldots,z_{i-1}(M)}}
        \frac{x^TMx}{x^Tx}.
    \end{align*}
    \label{lemm:varEig}
\end{lemm}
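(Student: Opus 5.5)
The plan is to prove both parts from the spectral theorem for real symmetric matrices. Write $M=Z\Lambda Z^T$, where $Z=[z_1(M),\ldots,z_n(M)]$ is orthogonal and $\Lambda=\operatorname{diag}(\lambda_1(M),\ldots,\lambda_n(M))$. For an arbitrary $x\in\mathbb R^n$, set $y=Z^Tx$ with entries $y_j=z_j(M)^Tx$. Orthogonality of $Z$ gives $\|y\|=\|x\|$, and we have
\begin{align*}
x^TMx=\sum_{j=1}^n\lambda_j(M)\,y_j^2 .
\end{align*}

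For the first claim, bound each $\lambda_j(M)$ below by $\lambda_1(M)$ and above by $\lambda_n(M)$. Then use $\sum_j y_j^2=\|x\|^2$, which yields the two-sided inequality immediately.

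For the variational characterization, fix $i\in\{2,\ldots,n\}$ and take $x\neq0$ with $x\perp z_1(M),\ldots,z_{i-1}(M)$. This forces $y_1=\cdots=y_{i-1}=0$. The quadratic form therefore reduces to $\sum_{j\geq i}\lambda_j(M)y_j^2$, which is at least $\lambda_i(M)\sum_{j\geq i}y_j^2=\lambda_i(M)\|x\|^2$ by the ordering $\lambda_i(M)\leq\lambda_{i+1}(M)\leq\cdots\leq\lambda_n(M)$. Hence the Rayleigh quotient is bounded below by $\lambda_i(M)$ on the admissible set.

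To show the infimum is attained, choose $x=z_i(M)$. It is admissible because the eigenvectors are orthonormal, and it gives $z_i^TMz_i=\lambda_i(M)$ with $z_i^Tz_i=1$. So the minimum exists and equals $\lambda_i(M)$.

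No step is a genuine obstacle. The only care needed is to make sure the orthonormal eigenbasis indexed consistently with the eigenvalue ordering exists, which is exactly the spectral theorem, including the case of repeated eigenvalues. Since the lemma is cited from \cite{chung1997spectral}, it suffices to present this short argument or simply refer to that source.
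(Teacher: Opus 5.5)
Your proof is correct and complete. It is the standard argument: diagonalize $M$ in an orthonormal eigenbasis ordered consistently with the eigenvalues, read the two-sided bound off $x^TMx=\sum_j\lambda_j(M)y_j^2$, obtain the constrained lower bound from $y_1=\cdots=y_{i-1}=0$, and show the minimum is attained at $z_i(M)$. The paper gives no proof of its own and simply cites Chung's spectral graph theory text, which rests on this same spectral-theorem reasoning.
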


\begin{lemm}[\cite{kurtz1992sufficient}]
    Let $P_p(\theta)\triangleq\sum_{l=0}^p a_l\theta^l$, where $p\geq2$
    and $a_l>0$ for $l=0,\ldots,p$. If
    \begin{align}
        a_l^2-4a_{l-1}a_{l+1}>0,\qquad l=1,\ldots,p-1,
        \label{eq:75}
    \end{align}
    then all roots of $P_p$ are real and distinct.
    \label{lemm:kurtz}
\end{lemm}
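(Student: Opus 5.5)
The approach is a sign-counting argument on a sequence of values of $P_p$. From the sequence of coefficients we construct $p+1$ real test points $\theta_0<\theta_1<\cdots<\theta_p<0$ at which $P_p$ alternates in sign. The intermediate value theorem then gives $p$ distinct real roots, one in each interval $(\theta_{j},\theta_{j+1})$, $j=0,\ldots,p-1$. Since $\deg P_p=p$, these are all the roots and they are simple. Positivity of the coefficients already rules out nonnegative roots, so every test point is taken negative.

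The natural test points are $\theta_j\triangleq-a_{j}/a_{j+1}$ for $j=0,\ldots,p-1$, together with an endpoint. Condition \eqref{eq:75} reads $a_l/a_{l+1}>4a_{l-1}/a_l$. The ratios $r_l\triangleq a_{l-1}/a_l$ therefore grow by a factor of more than $4$ at each step: $r_{l+1}>4r_l$. So $|\theta_j|=r_{j+1}$ is strictly increasing in $j$, and the ratios are spread out geometrically. At $\theta=-r_{k}$, the term $a_{k-1}\theta^{k-1}$ and the term $a_k\theta^k$ have equal magnitude and opposite sign. The key quantitative step is to show that the pair $a_{k}\theta^{k}+a_{k+1}\theta^{k+1}$, or a suitably chosen dominant term, outweighs all the others. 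Concretely, I would evaluate $P_p$ at $\theta=-\sqrt{r_kr_{k+1}}$, the geometric mean of consecutive ratios; a cleaner choice might be $-2r_k$ or similar. At that point the single term $a_k\theta^k$ dominates, because neighbours decay at least geometrically. The ratio of $|a_{k\pm m}\theta^{k\pm m}|$ to $|a_k\theta^k|$ is bounded by a product of factors each $<1/2$. Summing the geometric tails gives $\sum_{m\ge1}2^{-m^2}$-type or $\sum 2^{-m}$-type bounds, and I must check that the total is strictly less than $1$. That strict inequality is exactly where the constant $4$ in \eqref{eq:75} enters. Then $\operatorname{sign}P_p(\theta)=(-1)^k$ at $\theta=-\sqrt{r_kr_{k+1}}$ for $k=1,\ldots,p-1$.

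Adding the endpoints $\theta=0$ (sign $+$, giving $k=0$) and $\theta\to-\infty$ (sign $(-1)^p$, giving $k=p$) yields $p+1$ points with alternating signs. Because the $r_k$ increase, these points are ordered, and the roots are separated as claimed.

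The main obstacle is the dominance estimate. The naive geometric-mean point balances terms only against nearest neighbours. The factor-$4$ gap gives each neighbour ratio $<1/2$, and because the gaps compound, terms farther away shrink at least like $2^{-m}$ and in fact like $2^{-m^2}$. A plain $\sum 2^{-m}$ bound on both sides sums to $2$, which is not $<1$. So I must exploit the compounding: the ratio for distance $m$ is at most $\prod_{i=1}^{m}4^{-(i-1/2)}=2^{-m^2}$. This gives the sum $2\sum_{m\ge1}2^{-m^2}<2(1/2+1/16+\cdots)<1.2$. That is still not below $1$, so I would instead compare against the two-term group, or fall back to the classical argument. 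The classical argument is Kurtz's: transform to $Q(\theta)=\theta^{p}P_p(1/\theta)$ or use a Newton-type inequality and induction on $p$, via interlacing of $P_p$ with the polynomial obtained by dropping $a_0$. I expect to settle this with the sharper, case-by-case term grouping.
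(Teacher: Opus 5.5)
The paper quotes this lemma from Kurtz without proof, so your argument has to stand on its own. As written, it stops exactly at the step that carries the content.

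The frame is sound. With $r_l=a_{l-1}/a_l$, condition \eqref{eq:75} is equivalent to $r_{l+1}>4r_l$, and the points $t_k=\sqrt{r_kr_{k+1}}=\sqrt{a_{k-1}/a_{k+1}}$ increase strictly in $k$. Once $(-1)^kP_p(-t_k)>0$ is known for $k=1,\ldots,p-1$, the value $P_p(0)=a_0>0$ and the sign $(-1)^p$ far out on the negative axis give $p$ sign changes, hence $p$ distinct negative roots. You should drop the points $-a_j/a_{j+1}$ of your first paragraph: there two adjacent terms cancel and the nearest surviving terms have opposite signs, so nothing fixes the sign.

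What is missing is a proof that $(-1)^kP_p(-t_k)>0$, and the route you outline cannot supply it. Write $c_j=a_jt_k^j$. The claim that $c_k$ dominates $\sum_{j\neq k}c_j$ is simply false. If $r_{l+1}/r_l$ is slightly above $4$ for all $l$, then $c_{k\pm1}\approx c_k/2$ and $c_{k+2}\approx c_k/16$. So for $k\leq p-2$ already $c_{k-1}+c_{k+1}+c_{k+2}>c_k$. Your value $2\sum_{m\geq1}2^{-m^2}\approx1.13$ is therefore not an artifact of loose estimates, and no regrouping of magnitudes will push it below $1$. The fallback is not an argument either: Newton's inequalities are consequences of real-rootedness, not criteria for it.

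The missing idea is to use the signs. At $\theta=-t_k$ one has $a_j\theta^j=(-1)^jc_j$. The magnitudes decrease strictly as $|j-k|$ grows:
\begin{itemize}
\item for $j\geq k$, $c_{j+1}/c_j=t_k/r_{j+1}\leq t_k/r_{k+1}$;
\item for $j\leq k$, $c_{j-1}/c_j=r_j/t_k\leq r_k/t_k$;
\end{itemize}
and both bounds equal $\sqrt{r_k/r_{k+1}}<1/2$. The upper tail is $-c_{k+1}+(c_{k+2}-c_{k+3})+\cdots$. Each bracket, and a possible unpaired last term, is positive, so this tail is at least $-c_{k+1}$. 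Likewise the lower tail is at least $-c_{k-1}$. Because $t_k^2=a_{k-1}/a_{k+1}$ gives $c_{k-1}=c_{k+1}$,
\[
(-1)^kP_p(-t_k)\geq c_k-c_{k-1}-c_{k+1}=c_k\Bigl(1-2\sqrt{r_k/r_{k+1}}\Bigr)>0.
\]
Only the two nearest neighbours matter, and $2\sqrt{r_k/r_{k+1}}<1$ is exactly \eqref{eq:75}; this is where the constant $4$ enters. With this step inserted, your sign-alternation proof is complete. The point $-2r_k$ you also mention works by the same reasoning, since there $c_{k-1}=c_k/2$ and $c_{k+1}<c_k/2$.
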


\section{Problem Formulation}
\label{sec:probForm}

We consider a network of $n\in\mathbb Z^+$ agents governed by the
single-integrator dynamics
\begin{align}
    \dot{x}_i(t)=u_i(t),\qquad i=1,\ldots,n,
    \label{eq:agent_dynamics}
\end{align}
where $x_i(t)\in\mathbb R$ and $u_i(t)\in\mathbb R$ represent the state and
control input of agent $i$, respectively. Equivalently, we write \eqref{eq:agent_dynamics} in vector form as
\begin{align*}
    \dot{x}(t)=u(t),
\end{align*}
where $x(t)\triangleq [x_1(t),\cdots,x_n(t)]^T$ and $ u(t)\triangleq [u_1(t), \cdots, u_n(t)]^T$. The single-integrator model \eqref{eq:agent_dynamics} isolates the role of controller order from
additional agent dynamics, thereby allowing us to focus on how much integral
action is required for agents. Likewise, we
use scalar states in \eqref{eq:agent_dynamics} only for notational simplicity;
the same formulation extends to higher-dimensional states through Kronecker
products.

We model communication among the agents using an undirected graph
$\mathcal G=(\mathcal V,\mathcal E)$, where
$\mathcal V=\{1,\ldots,n\}$ denotes the set of agents and 
$\mathcal E\subseteq
    \big\{\{i,j\}:i,j\in\mathcal V,\ i\neq j\big\}$
denotes the set of communication links. We define the adjacency matrix
$A=[a_{ij}]\in\mathbb R^{n\times n}$ by setting $a_{ij}=1$ if
$\{i,j\}\in\mathcal E$ and $a_{ij}=0$ otherwise. Because $\mathcal G$ is
undirected, $a_{ij}=a_{ji}$. For each agent~$i$, we define its neighbor set as
$\mathcal N_i=\{j\in\mathcal V:a_{ij}=1\}$ and its degree as
$d_i=\sum_{j=1}^n a_{ij}$.
Using these degrees, we construct the degree matrix $D$ and the graph
Laplacian $L$ as
\begin{align*}
    D=\operatorname{diag}(d_1,\ldots,d_n),
    \qquad L=D-A.
\end{align*}

Let $x_0(t)\in\mathbb R$ denote the state of a dynamic target whose trajectory
specifies the common time-varying motion of the desired multi-agent
configuration. To indicate whether agent $i$ has access to the target state,
we define $a_{i0}\in\{0,1\}$ such that $a_{i0}=1$ if agent $i$ can access
$x_0(t)$ and $a_{i0}=0$ otherwise. We collect these access indicators in the
diagonal matrix
\begin{align*}
    A_0=\operatorname{diag}(a_{10},\ldots,a_{n0}).
\end{align*}

We impose the following standard connectivity and target-access assumption.
\begin{assump}
    The graph $\mathcal G$ is connected, and at least one agent has access to
    the target; equivalently, $\sum_{i=1}^n a_{i0}>0$.
    \label{assump_graph}
\end{assump}

Under Assumption~\ref{assump_graph}, $L+A_0$ is symmetric and positive
definite~\cite{olfati2004consensus}. Throughout the remainder of this paper, eigenvalues written without
an explicit matrix argument refer to those of $L+A_0$; namely,
\begin{align*}
    \lambda_i\triangleq\lambda_i(L+A_0),\qquad
    0<\lambda_1\leq\cdots\leq\lambda_n.
\end{align*}

For each communication edge $\{i,j\}\in\mathcal E$, we prescribe
$\bar\zeta_{ij}$ as the desired value of the relative displacement
$x_i(t)-x_j(t)$ and impose $\bar\zeta_{ji}=-\bar\zeta_{ij}$ so that the
requirement remains independent of the ordering of the two agents. For each
agent $i$ with $a_{i0}=1$, we similarly prescribe $\bar\zeta_{i0}$ as its
desired displacement from the target. To represent both types of requirements
in a unified form, we treat the target as node $0$ and define the augmented
graph
\begin{align*}
    \mathcal G^{\mathrm{augm}}
    &=(\mathcal V^{\mathrm{augm}},\mathcal E^{\mathrm{augm}}),\\
    \mathcal V^{\mathrm{augm}}
    &=\{0\}\cup\mathcal V,\\
    \mathcal E^{\mathrm{augm}}
    &=\mathcal E\cup
    \big\{\{i,0\}:i\in\mathcal V,\ a_{i0}=1\big\}.
\end{align*}
Thus, for every edge $\{i,j\}\in\mathcal E^{\mathrm{augm}}$, the constant
$\bar\zeta_{ij}$ specifies the desired value of $x_i(t)-x_j(t)$.

We now formulate the problem of achieving dynamic coordination through
linear distributed control.
\begin{probl}
    \label{probl_1}
    Given the agent dynamics~\eqref{eq:agent_dynamics} and a general smooth
    target $x_0(t)$, design linear distributed control laws
    $u_i(t)$, $i=1,\ldots,n$, using only $x_i(t)-x_j(t)$ and
    $\bar\zeta_{ij}$ for $j\in\mathcal N_i$, together with
    $x_i(t)-x_0(t)$ and $\bar\zeta_{i0}$ when $a_{i0}=1$, such that the
    resulting closed-loop system satisfies
    \begin{align*}
        \lim_{t\to\infty}
        \left|x_i(t)-x_j(t)-\bar\zeta_{ij}\right|=0,
        \qquad
        \{i,j\}\in\mathcal E^{\mathrm{augm}}.
    \end{align*}
\end{probl}



Problem~\ref{probl_1} provides a unified formulation of several classical
multi-agent coordination problems. If $\bar\zeta_{ij}=0$ for every
$\{i,j\}\in\mathcal E^{\mathrm{augm}}$, all agents asymptotically track the
target, and the problem reduces to leader-following consensus
\cite{hong2006tracking,jadbabaie2003coordination}. If the desired displacements 
specify a nontrivial configuration, the problem reduces to formation control
for a stationary target or formation tracking for a moving target
\cite{fax2004information}. In the higher-dimensional extension, if the
target-relative displacements place the agents around the target and the
agent--agent displacements specify the corresponding surrounding geometry,
the problem becomes a target-surrounding control problem
\cite{chen2010surrounding}. The next section characterizes when the desired
displacements define a consistent coordination pattern.

\section{Consistency}

Arbitrary assignments of the desired relative displacements may be mutually
inconsistent, rendering Problem~\ref{probl_1} infeasible. The following
example illustrates this issue.
\begin{examp}
    \label{examp:1}
    Consider the augmented network of four agents shown in Fig.~\ref{Fig1}.
    \begin{figure}[ht]
  \centering
\begin{tikzpicture}[-,>=stealth',shorten >=1pt,auto,node distance=1.8cm,
  thick,main node/.style={circle,scale=1.2,draw,font=\sffamily\tiny\bfseries}, root node/.style={circle,dashed,draw,font=\sffamily\tiny\bfseries}]
  \tikzset{edge/.style = {->,> = latex'}}
  \node[main node] (1) {1};
  \node[main node] (2) [left of=1] {2};
  \node[main node] (3) [below of=2] {3};
  \node[main node] (4) [right of=3] {4};
  \node[root node] (5) [right of=1] {$x_0(t)$};


  \path[every node/.style={font=\sffamily\small}]
    (1) edge[->] node {$\bar\zeta_{12}$} (2)
    (1) edge[->] node {$\bar\zeta_{14}$} (4)
    (3) edge[->] node {$\bar\zeta_{32}$} (2)
    (3) edge[->] node {$\bar\zeta_{34}$} (4)
    (3) edge[->] node {$\bar\zeta_{31}$} (1)
    (1) edge[->,dashed] node {$\bar\zeta_{10}$} (5);
\end{tikzpicture}
  \caption{\label{Fig1} An inconsistent set of desired relative displacements.}
\end{figure}
    The solid nodes represent the agents, and the dashed node represents the
    dynamic target. The arrows specify the orientation used to define each
    desired relative displacement. Let $\bar\zeta_{12}=\bar\zeta_{32}=1$ and
    $\bar\zeta_{31}=2$. The first two equalities require
    $x_1-x_2=x_3-x_2$ and hence $x_3-x_1=0$, which contradicts
    $\bar\zeta_{31}=2$. Therefore, these desired displacements are
    inconsistent, and Problem~\ref{probl_1} has no solution.
\end{examp}

Example~\ref{examp:1} raises a natural question: under what conditions do the
prescribed relative displacements make Problem~\ref{probl_1} solvable? The
following lemma relates this edge-wise coordination objective to the tracking
of consistent desired trajectories.
\begin{lemm}
    \label{prop_nec_suff}
    Suppose Assumption~\ref{assump_graph} holds and there exist constants
    $\bar d_0,\ldots,\bar d_n$, with $\bar d_0=0$, such that
    \begin{align}
        \label{eq_57}
        \bar{d}_i-\bar{d}_j=\bar\zeta_{ij},
        \qquad \{i,j\}\in\mathcal E^{\mathrm{augm}}.
    \end{align}
    Define
    \begin{align}
        \bar{x}_i(t) \triangleq x_0(t)+\bar{d}_i,
        \qquad i=0,1,\dots,n.
        \label{eq_20}
    \end{align}
    Problem~\ref{probl_1} is solved if and only if 
    \begin{align*}
        \lim_{t \to \infty}|x_i(t)-\bar{x}_i(t)|=0,
        \qquad i=1,\dots,n.
    \end{align*}
\end{lemm}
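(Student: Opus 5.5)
Introduce the tracking errors $e_i(t)\triangleq x_i(t)-\bar x_i(t)$ for $i=0,1,\ldots,n$, where $e_0(t)\equiv0$ because $\bar d_0=0$ gives $\bar x_0=x_0$. By \eqref{eq_20} and \eqref{eq_57}, every edge $\{i,j\}\in\mathcal E^{\mathrm{augm}}$ satisfies the identity
\begin{align*}
    x_i(t)-x_j(t)-\bar\zeta_{ij}
    =x_i(t)-x_j(t)-(\bar d_i-\bar d_j)
    =e_i(t)-e_j(t).
\end{align*}
This identity holds whether $j$ is an agent or the target node $0$. The edge errors of Problem~\ref{probl_1} are therefore exactly the differences of node errors on the augmented graph, and both implications reduce to statements about $e$.

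\emph{Sufficiency.} Suppose $e_i(t)\to0$ for $i=1,\ldots,n$. Since $e_0\equiv0$, the triangle inequality gives $|e_i(t)-e_j(t)|\leq|e_i(t)|+|e_j(t)|\to0$ for every edge. By the identity above, Problem~\ref{probl_1} is solved. This direction needs no connectivity.

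\emph{Necessity.} Suppose $e_i(t)-e_j(t)\to0$ for every $\{i,j\}\in\mathcal E^{\mathrm{augm}}$. This is the step that uses Assumption~\ref{assump_graph}. Since $\mathcal G$ is connected and some agent $k$ has $a_{k0}=1$, the augmented graph $\mathcal G^{\mathrm{augm}}$ is connected. Hence each agent $i$ is joined to node $0$ by a finite path $0=v_0,v_1,\ldots,v_m=i$ of edges in $\mathcal E^{\mathrm{augm}}$. Telescoping along this path gives
\begin{align*}
    e_i(t)=e_i(t)-e_0(t)=\sum_{\ell=1}^{m}\big(e_{v_\ell}(t)-e_{v_{\ell-1}}(t)\big).
\end{align*}
This is a finite sum of terms that each tend to zero, so $e_i(t)\to0$.

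An alternative compact route writes the edge errors as $B^Te$ for the incidence matrix of $\mathcal G^{\mathrm{augm}}$ with the column for node $0$ removed. One then notes that $BB^T=L+A_0$ is positive definite under Assumption~\ref{assump_graph}, which gives $\|e\|^2\leq\lambda_1^{-1}\|B^Te\|^2$. The path argument is more elementary, and I would use it.

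I do not expect any step to be a real obstacle. The only points needing care are:
\begin{itemize}
    \item checking that the identity holds uniformly for agent--agent edges and agent--target edges, using the sign convention $\bar\zeta_{ji}=-\bar\zeta_{ij}$;
    \item justifying the connectivity of $\mathcal G^{\mathrm{augm}}$, which is precisely where the target-access condition enters.
\end{itemize}
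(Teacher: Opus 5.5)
Your proof is correct and takes essentially the same approach as the paper. The paper's sufficiency step is exactly your identity $x_i-x_j-\bar\zeta_{ij}=e_i-e_j$, and its necessity step sorts the agents into breadth-first layers $\mathcal H_1,\dots,\mathcal H_h$ starting from the target-informed agents and inducts layer by layer, which is your telescoping along a path from node $0$ written as an induction (your spectral alternative via $BB^T=L+A_0$ is also valid and gives the quantitative bound $\|e\|\leq\lambda_1^{-1/2}\|B^Te\|$, but it is not needed).
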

\begin{IEEEproof}
    \emph{(Necessity)}
    Starting from the agents that have access to the target, define the layers
    \begin{align}
        \mathcal{H}_1
        &\triangleq \{i\in\mathcal{V}:a_{i0}>0\}, \nonumber\\
        \mathcal{H}_p
        &\triangleq
        \left\{i\in\mathcal{V}\setminus
        \left(\bigcup_{q=1}^{p-1}\mathcal{H}_q\right):
        \exists j\in\mathcal{H}_{p-1}\text{ such that }a_{ij}>0\right\},
        \label{eq_21}
    \end{align}
    for $p=2,\dots,h$, where $h$ denotes the number of nonempty layers.
    Assumption~\ref{assump_graph} ensures that these layers partition the agent
    set, i.e., $\bigcup_{p=1}^h\mathcal H_p=\mathcal V$ and
    $\mathcal H_p\cap\mathcal H_q=\emptyset$ for $p\neq q$.

    We now proceed by induction over the layers. For any
    $i\in\mathcal H_1$, the coordination objective on the edge $\{i,0\}$
    yields
    \begin{align}
        x_i(t)-x_0(t)\to\bar\zeta_{i0}.
        \label{eq_51}
    \end{align}
    By~\eqref{eq_57}, $\bar d_i=\bar\zeta_{i0}$ because $\bar d_0=0$.
    Therefore,
    \begin{align}
        x_i(t)\to x_0(t)+\bar\zeta_{i0}
        =x_0(t)+\bar{d}_i=\bar{x}_i(t),
        \label{eq_50}
    \end{align}
    which establishes the result for the first layer.

    Suppose the result holds for every agent in $\mathcal H_p$, namely,
    \begin{align}
        x_i(t)\to\bar{x}_i(t),\qquad i\in\mathcal H_p.
        \label{eq_52}
    \end{align}
    For any $i\in\mathcal H_{p+1}$, the definition of the layers guarantees
    the existence of a neighbor $j\in\mathcal H_p\cap\mathcal N_i$. The
    coordination objective on the edge $\{i,j\}$ then gives
    \begin{align}
        x_i(t)-x_j(t)-\bar\zeta_{ij}\to0.
        \label{eq_54}
    \end{align}
    By the induction hypothesis,
    \begin{align}
        x_j(t)\to\bar{x}_j(t)=x_0(t)+\bar d_j.
        \label{eq_53}
    \end{align}
    Combining~\eqref{eq_53} and~\eqref{eq_54} yields
    \begin{align*}
        x_i(t)\to x_0(t)+\bar d_j+\bar\zeta_{ij}.
    \end{align*}
    Since~\eqref{eq_57} implies
    $\bar d_i=\bar d_j+\bar\zeta_{ij}$, we obtain
    $x_i(t)\to\bar x_i(t)$. This completes the induction and proves necessity.

    \emph{(Sufficiency)}
    Suppose $x_i(t)-\bar x_i(t)\to0$ for all $i=1,\dots,n$. Moreover,
    $x_0(t)-\bar x_0(t)\equiv0$ by~\eqref{eq_20}. Hence, for every
    $\{i,j\}\in\mathcal E^{\mathrm{augm}}$,
    \begin{align*}
        x_i(t)-x_j(t)-\bar\zeta_{ij}
        &=[x_i(t)-\bar x_i(t)]-[x_j(t)-\bar x_j(t)]\\
        &\quad+\bar d_i-\bar d_j-\bar\zeta_{ij}\to0,
    \end{align*}
    where the last equality follows from~\eqref{eq_57}. Therefore, the
    coordination objective in Problem~\ref{probl_1} holds on every augmented
    edge, proving sufficiency.
\end{IEEEproof}

Lemma~\ref{prop_nec_suff} shows that solving Problem~\ref{probl_1} requires
each agent trajectory $x_i(t)$ to converge to the desired trajectory
$\bar x_i(t)=x_0(t)+\bar d_i$. The constants $\bar d_i$ must reproduce all
prescribed relative displacements through~\eqref{eq_57}. This requirement
motivates the following definition.
\begin{defin}[Consistency]
    The desired relative displacements in Problem~\ref{probl_1} are
    \emph{consistent} if there exist constants $\bar d_0,\ldots,\bar d_n$,
    with $\bar d_0=0$, that satisfy~\eqref{eq_57}.
\end{defin}

We next characterize consistency using the incidence matrix. Arbitrarily orient
and index the edges of $\mathcal G^{\mathrm{augm}}$, and let
$E^{\mathrm{augm}}\in\mathbb R^{(n+1)\times|\mathcal E^{\mathrm{augm}}|}$ be
the corresponding incidence matrix. For an edge oriented from node $i$ to
node $j$, its column has $1$ in row $i$, $-1$ in row $j$, and zeros
elsewhere. Let
$E_x^{\mathrm{augm}}\in\mathbb R^{n\times|\mathcal E^{\mathrm{augm}}|}$ be
obtained by deleting the target-node row, and let $\bar\zeta$ collect the
desired displacements in the same edge order, with component
$\bar\zeta_{ij}$ for an edge oriented from $i$ to $j$.

For the augmented graph in Fig.~\ref{Fig1}, use the node order
$(0,1,2,3,4)$ and the oriented-edge order
$(1,2),(1,4),(3,2),(3,4),(3,1),(1,0)$. Then,
\begin{align*}
E^{\mathrm{augm}}
=
\begin{bmatrix}
0&0&0&0&0&-1\\
1&1&0&0&-1&1\\
-1&0&-1&0&0&0\\
0&0&1&1&1&0\\
0&-1&0&-1&0&0
\end{bmatrix},
\qquad
\bar\zeta=
\begin{bmatrix}
\bar\zeta_{12}\\
\bar\zeta_{14}\\
\bar\zeta_{32}\\
\bar\zeta_{34}\\
\bar\zeta_{31}\\
\bar\zeta_{10}
\end{bmatrix}.
\end{align*}
Deleting the first row gives
\begin{align*}
E_x^{\mathrm{augm}}
=
\begin{bmatrix}
1&1&0&0&-1&1\\
-1&0&-1&0&0&0\\
0&0&1&1&1&0\\
0&-1&0&-1&0&0
\end{bmatrix}.
\end{align*}

The following theorem provides an algebraic characterization of consistency.
\begin{theor}
    \label{theor:consistency}
    The desired relative displacements are consistent if and only if
    \begin{align*}
        \bar\zeta\in\operatorname{range}
        \big((E_x^{\mathrm{augm}})^T\big).
    \end{align*}
\end{theor}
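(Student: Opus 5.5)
The plan is to rewrite the consistency condition \eqref{eq_57} as a linear system in the unknown offsets, using the incidence matrix. Stack the offsets as $\bar d\triangleq[\bar d_0,\bar d_1,\ldots,\bar d_n]^T\in\mathbb R^{n+1}$. By the construction of $E^{\mathrm{augm}}$, the column associated with an edge oriented from $i$ to $j$ has $+1$ in row $i$ and $-1$ in row $j$. Hence the corresponding entry of $(E^{\mathrm{augm}})^T\bar d$ equals $\bar d_i-\bar d_j$. Since $\bar\zeta$ lists $\bar\zeta_{ij}$ in the same edge order and orientation, condition \eqref{eq_57} over all edges of $\mathcal E^{\mathrm{augm}}$ is equivalent to the single vector equation
\begin{align*}
(E^{\mathrm{augm}})^T\bar d=\bar\zeta .
\end{align*}
The antisymmetry convention $\bar\zeta_{ji}=-\bar\zeta_{ij}$ guarantees that the choice of orientation does not matter. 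Reversing an edge flips the sign of both the column and the component of $\bar\zeta$, so the edge-wise equation is unchanged.

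Next I will impose $\bar d_0=0$. Split $(E^{\mathrm{augm}})^T\bar d$ as the contribution of the target row plus the contribution of the remaining rows:
\begin{align*}
(E^{\mathrm{augm}})^T\bar d=\bar d_0\, e_0^T+(E_x^{\mathrm{augm}})^T\bar d_x,
\end{align*}
where $e_0^T$ denotes the transpose of the target row of $E^{\mathrm{augm}}$ and $\bar d_x\triangleq[\bar d_1,\ldots,\bar d_n]^T$. When $\bar d_0=0$, the system reduces to $(E_x^{\mathrm{augm}})^T\bar d_x=\bar\zeta$.

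\emph{(Only if).} Suppose the displacements are consistent. The witnessing $\bar d_x$ shows directly that $\bar\zeta\in\operatorname{range}((E_x^{\mathrm{augm}})^T)$.

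\emph{(If).} Suppose $\bar\zeta=(E_x^{\mathrm{augm}})^Ty$ for some $y\in\mathbb R^n$. Set $\bar d_0=0$ and $\bar d_x=y$. Then every edge equation holds, including the target edges $\{i,0\}$, whose columns contribute only $\bar d_i-0$. Therefore \eqref{eq_57} is satisfied, and the displacements are consistent.

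The argument is short. The only step requiring care is the first one: checking that the entrywise identification is exact for both agent--agent edges and agent--target edges. For agent--target edges, deleting row $0$ must lose no information precisely because $\bar d_0=0$, and the orientation convention must agree with the sign convention of $\bar\zeta$. Assumption~\ref{assump_graph} is not needed for this algebraic equivalence. It only ensures that $E_x^{\mathrm{augm}}$ has full row rank, which gives uniqueness of $\bar d$; I will mention this as a remark rather than use it.
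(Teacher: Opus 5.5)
Your proposal is correct and matches the paper's proof: both stack \eqref{eq_57} edge by edge, use $\bar d_0=0$ to drop the target row, and reduce consistency to solvability of $(E_x^{\mathrm{augm}})^T\bar d=\bar\zeta$. Your extra remarks on edge orientation, and on Assumption~\ref{assump_graph} being needed only for uniqueness of $\bar d$, are accurate elaborations of the same argument.
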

\begin{IEEEproof}
    Let $\bar d=[\bar d_1,\dots,\bar d_n]^T$. Stacking~\eqref{eq_57} in the
    chosen edge order and using $\bar d_0=0$ yields
    \begin{align}
        (E_x^{\mathrm{augm}})^T\bar d=\bar\zeta.
        \label{eq_56}
    \end{align}
    This equation is solvable if and only if
    $\bar\zeta\in\operatorname{range}((E_x^{\mathrm{augm}})^T)$.
\end{IEEEproof}

    The condition in Theorem~\ref{theor:consistency} has a direct
    graph-theoretic interpretation: the signed sum of the desired
    displacements along every cycle of $\mathcal G^{\mathrm{augm}}$ must be
    zero. For example, the cycle formed by agents $1$, $2$, and $3$ in
    Fig.~\ref{Fig1} requires
    $\bar\zeta_{12}-\bar\zeta_{32}+\bar\zeta_{31}=0$, which is violated by
    the values in Example~\ref{examp:1}. Moreover, under
    Assumption~\ref{assump_graph}, $E_x^{\mathrm{augm}}$ has full row rank;
    hence, whenever the displacements are consistent, the offset vector
    $\bar d$ satisfying~\eqref{eq_56} is unique.

\section{Algorithm}
\label{sec:adaptive_order}

Building on the coordination objective established above, we now
develop OADIC and its distributed performance monitor.
OADIC starts at the lowest controller order and activates an additional
integral state whenever the regulation error fails to decrease by a prescribed
percentage over a decision interval. Thus, order adequacy is assessed from the
relative improvement achieved by the current controller, rather than from the
absolute magnitude of the error. This process continues until the controller
reaches a specified maximum order.

Suppose that the desired displacements in Problem~\ref{probl_1} are
consistent. We define the desired network trajectory and the corresponding
tracking error as
\begin{align}
    \bar{x}(t)\triangleq\mathbf{1}x_0(t)+\bar{d},\qquad
    e(t)\triangleq x(t)-\bar{x}(t).
    \label{eq:smooth_reference_error}
\end{align}
Here $\bar{d}=[\bar{d}_1,\dots,\bar{d}_n]^T$ denotes the offset vector that
satisfies~\eqref{eq_57}. The stacked proportional error is then given by
\begin{align}
    \sigma_0(t)\triangleq (L+A_0)e(t).
    \label{eq:smooth_sigma0}
\end{align}

Define
$\mathcal{Q}\triangleq\{1,\dots,q_{\max}\}$, where the prescribed maximum
order satisfies $q_{\max}\geq2$. This lower bound ensures that at least one
order increase is possible: $q=1$ corresponds to proportional feedback,
whereas $q=2$ activates the first integral state. For each $q\in\mathcal Q$, the order-$q$
controller contains $q-1$ integral states, and agent $i$ implements
\begin{align}
    u_i(t)&=-\sum_{l=0}^{q-1}\kappa_l\sigma_{il}(t),
    \label{eq:smooth_control}\\
    \dot{\sigma}_{il}(t)&=\sigma_{i,l-1}(t),
    \qquad l=1,\dots,q-1,
    \label{eq:smooth_integrators}
\end{align}
where $\kappa_l>0$ and
\begin{align}
    \sigma_{i0}(t)
    &=\sum_{j\in\mathcal{N}_i}
    \big[x_i(t)-x_j(t)-\bar\zeta_{ij}\big]+a_{i0}\big[x_i(t)-x_0(t)-\bar\zeta_{i0}\big].
    \label{eq:smooth_agent_proportional_error}
\end{align}
Agent $i$ can compute every term
in~\eqref{eq:smooth_agent_proportional_error} locally using relative
measurements from its neighbors and, when $a_{i0}=1$, its relative measurement
with respect to the target.

To assess the regulation performance locally, agent $i$ computes
\begin{align}
    \rho_i^2(t)
    \triangleq\frac{1}{2}\sum_{j\in\mathcal{N}_i}
    \big[x_i(t)-x_j(t)-\bar\zeta_{ij}\big]^2+a_{i0}\big[x_i(t)-x_0(t)-\bar\zeta_{i0}\big]^2.
    \label{eq:smooth_local_error}
\end{align}
The factor $1/2$ compensates for double counting of the agent--agent terms.
Indeed, summing~\eqref{eq:smooth_local_error} over all agents counts the
squared error for each undirected edge $\{i,j\}$ once through $\rho_i^2(t)$
and once through $\rho_j^2(t)$. The factor $1/2$ therefore gives this error a
total coefficient of one. By contrast, only the agent with access to the
corresponding target measurement contributes each target-related term.

We define the network-wide regulation indicator as
\begin{align}
    J(t)\triangleq\max_{i=1,\dots,n}\rho_i(t).
    \label{eq:smooth_switching_indicator}
\end{align}
Let $0=t_0<t_1<t_2<\cdots$ denote the decision instants, with $t_k\to\infty$
and $t_{k+1}-t_k\geq T_d>0$, where $T_d$ specifies the minimum interval
between consecutive order decisions. To quantify the improvement achieved
during $[t_k,t_{k+1}]$, the agents evaluate the network-wide indicator at
the beginning and end of the interval. Define the relative error reduction
over this interval as
\begin{align}
    r_k\triangleq
    \begin{cases}
        \displaystyle
        \frac{J(t_k^+)-J(t_{k+1}^-)}{J(t_k^+)},
        &J(t_k^+)>0,\\[3mm]
        1,&J(t_k^+)=J(t_{k+1}^-)=0,\\
        -\infty,&J(t_k^+)=0<J(t_{k+1}^-).
    \end{cases}
    \label{eq:smooth_relative_reduction}
\end{align}
Thus, $100r_k\%$ is the percentage by which the monitored error decreases
during the interval. In particular, $r_k<0$ indicates error growth, while
the convention $r_k=1$ at zero error prevents an unnecessary order increase
when exact regulation has already been achieved.

\begin{rem}
    \label{rem:smooth_distributed_max}
    The endpoint values $J(t_k^+)$ and $J(t_{k+1}^-)$ can be computed in a
    fully distributed manner. At each endpoint, agent $i$ initializes a
    finite-time max-consensus algorithm with its local value $\rho_i(t_k^+)$
    or $\rho_i(t_{k+1}^-)$ and exchanges only its current estimate with
    neighboring agents. Because $\mathcal G$ is connected by
    Assumption~\ref{assump_graph}, all agents obtain the same network-wide
    maximum after the max-consensus procedure
    terminates~\cite{furchi2025finite}. 
\end{rem}

The following lemma establishes that the network-wide indicator $J(t)$,
although constructed from the local quantities $\rho_i(t)$, provides an
equivalent measure of the global tracking error.
\begin{lemm}
    \label{lemm:smooth_monitor_equivalence}
    Under Assumption~\ref{assump_graph},
    \begin{align*}
        \sum_{i=1}^n\rho_i^2(t)&=e^T(t)(L+A_0)e(t),\\
        \sqrt{\frac{\lambda_1}{n}}\|e(t)\|
        &\leq J(t)\leq
        \sqrt{\lambda_n}\|e(t)\|.
    \end{align*}
\end{lemm}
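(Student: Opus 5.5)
The plan is to first rewrite each local residual in terms of the tracking error $e(t)$ and then apply Lemma~\ref{lemm:varEig}. By consistency, \eqref{eq_57} gives $\bar\zeta_{ij}=\bar d_i-\bar d_j$ and $\bar\zeta_{i0}=\bar d_i$. Using \eqref{eq:smooth_reference_error}, this yields
\begin{align*}
x_i-x_j-\bar\zeta_{ij}=e_i-e_j,\qquad x_i-x_0-\bar\zeta_{i0}=e_i .
\end{align*}
Hence
\begin{align*}
\rho_i^2=\tfrac12\sum_{j\in\mathcal N_i}(e_i-e_j)^2+a_{i0}e_i^2 .
\end{align*}

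Next we sum over $i$. Each undirected edge appears twice with weight $1/2$, so
\begin{align*}
\sum_i\rho_i^2=\sum_{\{i,j\}\in\mathcal E}(e_i-e_j)^2+\sum_i a_{i0}e_i^2 .
\end{align*}
The first term is the standard Laplacian quadratic form $e^TLe$. The second term equals $e^TA_0e$. Together they give the identity $\sum_i\rho_i^2=e^T(L+A_0)e$.

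For the bounds, since $J$ is the maximum of the nonnegative quantities $\rho_i$, we have
\begin{align*}
\tfrac1n\sum_i\rho_i^2\le J^2\le\sum_i\rho_i^2 .
\end{align*}
Lemma~\ref{lemm:varEig} applied to the symmetric positive definite matrix $L+A_0$ then gives
\begin{align*}
\lambda_1\|e\|^2\le e^T(L+A_0)e\le\lambda_n\|e\|^2 .
\end{align*}
Combining these two chains and taking square roots (all quantities are nonnegative) yields $\sqrt{\lambda_1/n}\,\|e\|\le J\le\sqrt{\lambda_n}\,\|e\|$.

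The only step needing care is the double-counting bookkeeping in the edge sum. It is straightforward once the residuals are expressed through $e$, and that substitution is precisely where consistency is used. Nothing here is a genuine obstacle. Positivity of $\lambda_1$ under Assumption~\ref{assump_graph} is already stated in the paper, and it is what makes the lower bound nontrivial.
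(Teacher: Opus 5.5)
Your proposal is correct and follows the paper's proof essentially step for step. You use consistency to rewrite each residual in terms of $e$, use the $1/2$ factor to handle double counting in the sum, bound $J^2\le\sum_i\rho_i^2\le nJ^2$, and then combine with the Rayleigh bounds from Lemma~\ref{lemm:varEig} for $L+A_0$.
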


\begin{IEEEproof}
    By~\eqref{eq:smooth_reference_error}, $ x_i(t)=e_i(t)+x_0(t)+\bar d_i.$
    Moreover, the consistency condition together with~\eqref{eq_57} implies
    $\bar d_i-\bar d_j=\bar\zeta_{ij}$ for every agent edge and
    $\bar d_i=\bar\zeta_{i0}$ whenever $a_{i0}>0$. Therefore,
    \begin{align*}
        x_i(t)-x_j(t)-\bar\zeta_{ij}
        &=e_i(t)-e_j(t)+\bar d_i-\bar d_j-\bar\zeta_{ij}\\
        &=e_i(t)-e_j(t),\\
        x_i(t)-x_0(t)-\bar\zeta_{i0}
        &=e_i(t)+\bar d_i-\bar\zeta_{i0}=e_i(t).
    \end{align*}
    Substituting these
    identities into~\eqref{eq:smooth_local_error} and summing over all agents
    yields
    \begin{align*}
        \sum_{i=1}^n\rho_i^2(t)
        &=\frac{1}{2}\sum_{i=1}^n\sum_{j\in\mathcal N_i}
        [e_i(t)-e_j(t)]^2+e^T(t)A_0e(t)\\
        &=e^T(t)(L+A_0)e(t).
    \end{align*}
    This proves the first equality in
    Lemma~\ref{lemm:smooth_monitor_equivalence}.
    
    Since $J(t)=\max_i\rho_i(t)$, we also have
    \begin{align*}
        J^2(t)\leq\sum_{i=1}^n\rho_i^2(t)\leq nJ^2(t).
    \end{align*}
    Finally, applying Lemma~\ref{lemm:varEig} to $L+A_0$ gives
    \begin{align*}
        \lambda_1\|e(t)\|^2
        \leq e^T(t)(L+A_0)e(t)
        \leq \lambda_n\|e(t)\|^2.
    \end{align*}
    Using the established identity in the preceding two inequalities yields
    $\lambda_1\|e(t)\|^2\leq nJ^2(t)$ and
    $J^2(t)\leq\lambda_n\|e(t)\|^2$. Taking square roots gives
    $\sqrt{\lambda_1/n}\|e(t)\|\leq J(t)\leq
    \sqrt{\lambda_n}\|e(t)\|$, as stated in
    Lemma~\ref{lemm:smooth_monitor_equivalence}.
\end{IEEEproof}

\begin{rem}
    Lemma~\ref{lemm:smooth_monitor_equivalence} shows that $J(t)$ is a
    distributedly computable measure equivalent to the global tracking error
    $\|e(t)\|$. In particular, $J(t)=0$ if and only if $e(t)=0$. Therefore,
    the controller order can be selected using the local regulation quantities
    $\rho_i(t)$ and a network-wide maximum operation without explicitly
    reconstructing the global error vector $e(t)$. 
    \label{rem:smooth_monitor_equivalence}
\end{rem}

Let $q_k\triangleq q(t_k^+)$ denote the controller order applied over
$[t_k,t_{k+1})$. To determine whether this order remains adequate, choose
$\varepsilon_0>0$, $\varepsilon_\infty\geq0$, and $\alpha>0$ such that
$\varepsilon_0+\varepsilon_\infty<1$, and define the required relative
reduction sequence
\begin{align}
    \varepsilon_k
    \triangleq\varepsilon_\infty+
    \frac{\varepsilon_0}{(k+1)^\alpha}.
    \label{eq:smooth_threshold_schedule}
\end{align}
Here, $100\varepsilon_k\%$ is the minimum percentage reduction expected over
the $k$th decision interval. The parameter $\varepsilon_\infty$ specifies the
long-run minimum improvement, $\varepsilon_0$ adds a stricter initial
requirement, and $\alpha$ controls how quickly this additional requirement
decays. 

Starting from $q_0=1$, update the common controller order at $t_{k+1}$
according to
\begin{align}
    q_{k+1}&=
    \begin{cases}
        q_k+1,&\text{if }r_k\leq\varepsilon_k
        \text{ and }q_k<q_{\max},\\
        q_k,&\text{otherwise},
    \end{cases}
    \label{eq:smooth_switching_rule}
\end{align}
Under~\eqref{eq:smooth_switching_rule}, a relative reduction no greater than
$\varepsilon_k$ indicates that the current internal-model order has not
improved the tracking error sufficiently during the decision interval. OADIC
therefore activates one additional integral state, provided that
$q_k<q_{\max}$. If the achieved reduction is larger than the required
percentage, or if the maximum order has already been reached, the order
remains unchanged. 

We specify the controller-state transfer accompanying an order increase as
follows. At a switching instant $t_s$ from order $q$ to order $q+1$, retain
all existing integral states and initialize only the newly activated state:
\begin{align}
    \sigma_l(t_s^+)&=\sigma_l(t_s^-),\quad l=1,\dots,q-1,\nonumber\\
    \sigma_q(t_s^+)&=0,\qquad\dot{\sigma}_q(t)=\sigma_{q-1}(t).
    \label{eq:smooth_reset}
\end{align}
Because both $x(t)$ and $x_0(t)$ are continuous at $t_s$, the tracking error
also remains continuous, i.e., $e(t_s^+)=e(t_s^-)$. It then follows from
\eqref{eq:smooth_sigma0} that
$\sigma_0(t_s^+)=\sigma_0(t_s^-)$. Combining this identity with the reset
rule~\eqref{eq:smooth_reset}, the order-$(q+1)$ control input immediately
after the switch is
\begin{align}
    u(t_s^+)
    &=-\sum_{l=0}^{q}\kappa_l\sigma_l(t_s^+)\nonumber\\
    &=-\sum_{l=0}^{q-1}\kappa_l\sigma_l(t_s^-)
    -\kappa_q\sigma_q(t_s^+)\nonumber\\
    &=-\sum_{l=0}^{q-1}\kappa_l\sigma_l(t_s^-)=u(t_s^-).
    \label{eq:smooth_bumpless}
\end{align}
Thus, $u(t_s^+)=u(t_s^-)$, and the proposed state-transfer rule yields a
bumpless change in controller order.

\begin{rem}
    \label{rem:smooth_distributed}
    OADIC, consisting of the controller
    \eqref{eq:smooth_control}--\eqref{eq:smooth_integrators}, the local monitor
    \eqref{eq:smooth_local_error}--\eqref{eq:smooth_relative_reduction}, the
    switching law~\eqref{eq:smooth_switching_rule}, and the reset
    \eqref{eq:smooth_reset}, remains distributed. Agent $i$ computes
    $\rho_i(t)$ from local relative measurements and records its endpoint
    values over $[t_k,t_{k+1}]$, while $J(t_k^+)$ and $J(t_{k+1}^-)$ are
    obtained distributively as described in
    Remark~\ref{rem:smooth_distributed_max}. Each agent then evaluates the
    same relative reduction $r_k$ and applies the same order update.
\end{rem}

It remains to choose the controller gains $\kappa_l$. Because the controller order is
adjusted online, designing an unrelated set of gains for each admissible order
would require gain replacement at every switch and would complicate the
stability analysis. We therefore use a single nested gain sequence, so that an
order increase only activates the next gain while leaving all previously used
gains unchanged. The sequence is selected constructively to provide a common
sufficient stability condition for every fixed-order subsystem. Specifically,
choose
\begin{align}
    \kappa_0&>0,\qquad
    0<\kappa_1<\frac{\lambda_1}{4}\kappa_0^2,\nonumber\\
    0<\kappa_{l+1}&<\frac{\kappa_l^2}{4\kappa_{l-1}},
    \quad l=1,\dots,q_{\max}-2,\quad\text{if }q_{\max}\geq3.
    \label{eq:smooth_gain_design}
\end{align}
Thus, the order-$q$ controller uses the prefix
$\kappa_0,\dots,\kappa_{q-1}$ of the same sequence. When $q_{\max}=2$, only
the first line of~\eqref{eq:smooth_gain_design} is needed. When
$q_{\max}\geq3$, the last recursion, obtained by setting
$l=q_{\max}-2$, determines the admissible range of
$\kappa_{q_{\max}-1}$. 
Distributed Laplacian-spectrum estimation
algorithms allow agents to estimate $\lambda_1$ through neighbor-to-neighbor
exchanges~\cite{franceschelli2013decentralized}. 
Alternatively, any certified
lower bound $\underline{\lambda}_1\leq\lambda_1$ can replace $\lambda_1$ in
\eqref{eq:smooth_gain_design}, yielding a more conservative but still
sufficient gain selection.

For the analysis, let
\begin{align*}
    \eta(t)&\triangleq
    \begin{bmatrix}
    e^T(t)&\cdots&[e^{(q(t)-1)}(t)]^T
    \end{bmatrix}^T,\\
    w(t)&\triangleq
    \begin{bmatrix}\dot x_0(t)&\cdots&x_0^{(q_{\max})}(t)
    \end{bmatrix}^T,
\end{align*}
and define
\begin{align*}
    \mathcal A_1&\triangleq-\kappa_0(L+A_0),\qquad
    \mathcal B_1\triangleq-\mathbf{1},\\
    \mathcal A_q&\triangleq
    \begin{bmatrix}
    0_{n(q-1)\times n}&I_{n(q-1)}\\
    -\kappa_{q-1}(L+A_0)&
    -\begin{bmatrix}\kappa_{q-2}&\cdots&\kappa_0\end{bmatrix}
    \otimes(L+A_0)
    \end{bmatrix},\\
    &\hspace{32mm}q=2,\dots,q_{\max},\\
    \mathcal B_q&\triangleq
    \begin{bmatrix}0&0&\cdots&0&-\mathbf{1}^T\end{bmatrix}^T,
    \quad q=2,\dots,q_{\max}.
\end{align*}
Between two consecutive switching instants, $q(t)=q$ is constant. Stacking
\eqref{eq:smooth_control}, and using
\eqref{eq:smooth_reference_error} and
$\dot{\bar x}(t)=\mathbf1\dot x_0(t)$, gives
\begin{align}
    \dot e(t)=-\sum_{l=0}^{q-1}\kappa_l\sigma_l(t)
    -\mathbf1\dot x_0(t).
    \label{eq:smooth_error_first_derivative}
\end{align}
Differentiating~\eqref{eq:smooth_error_first_derivative} $q-1$ times gives
\begin{align}
    e^{(q)}(t)=-\sum_{l=0}^{q-1}\kappa_l
    \sigma_l^{(q-1)}(t)-\mathbf1x_0^{(q)}(t).
    \label{eq:smooth_error_q_derivative}
\end{align}
Equation~\eqref{eq:smooth_sigma0} and repeated use of
\eqref{eq:smooth_integrators} give
\begin{align}
    \begin{aligned}
        \sigma_0^{(q-1)}(t)&=(L+A_0)e^{(q-1)}(t),\\
        \sigma_l^{(q-1)}(t)&=\sigma_0^{(q-1-l)}(t)
        =(L+A_0)e^{(q-1-l)}(t),\\[-1mm]
        &\hspace{39mm}l=1,\dots,q-1.
    \end{aligned}
    \label{eq:smooth_sigma_derivatives}
\end{align}
Substituting~\eqref{eq:smooth_sigma_derivatives} into
\eqref{eq:smooth_error_q_derivative} gives
\begin{align}
    e^{(q)}(t)+\sum_{l=0}^{q-1}\kappa_l(L+A_0)e^{(q-1-l)}(t)
    =-\mathbf 1x_0^{(q)}(t).
    \label{eq:smooth_error_dynamics}
\end{align}
By the definition of $\eta$,
\begin{align*}
    \dot\eta(t)
    =\begin{bmatrix}
    [e^{(1)}(t)]^T&\cdots&[e^{(q-1)}(t)]^T&[e^{(q)}(t)]^T
    \end{bmatrix}^T.
\end{align*}
The first $q-1$ block rows of $\mathcal A_q\eta(t)$ reproduce
$e^{(1)}(t),\dots,e^{(q-1)}(t)$, while its last block row is the left-hand side
of~\eqref{eq:smooth_error_dynamics} solved for $e^{(q)}(t)$. The input term in
that last block row is
$-\mathbf1x_0^{(q)}(t)=\mathcal B_qx_0^{(q)}(t)$. Hence
\begin{align}
    \dot\eta(t)=\mathcal A_q\eta(t)+\mathcal B_qx_0^{(q)}(t).
    \label{eq:smooth_state_equation}
\end{align}

The next lemma establishes a linear reset map that uniquely determines the
post-switch derivative-state vector from its pre-switch value and the target
derivatives when the controller order increases from $q$ to $q+1$.
\begin{lemm}
    \label{lemm:smooth_reset_map}
    For each $q=1,\dots,q_{\max}-1$, there exist matrices
    $\mathcal R_q\in\mathbb R^{n(q+1)\times nq}$ and
    $\mathcal S_q\in\mathbb R^{n(q+1)\times q_{\max}}$ such that, whenever
    the controller order increases from $q$ to $q+1$ at $t_s$,
    \begin{align*}
        \eta(t_s^+)
        =\mathcal R_q\eta(t_s^-)+\mathcal S_qw(t_s).
    \end{align*}
\end{lemm}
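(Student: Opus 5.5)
The plan is to show that every block of $\eta(t_s^+)$ is a linear function of $\eta(t_s^-)$ and of the target derivatives. The argument rests on three facts. First, $e$ and the first $q-1$ integral states are continuous at $t_s$. Second, the new state is reset to zero. Third, for either order the derivatives $e^{(r)}$ can be written through the closed-loop relations in terms of $e$ and the integral states $\sigma_1,\dots,\sigma_{\max(q-1,0)}$. We proceed in three steps.

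\emph{Step 1: from error derivatives to controller states before the switch.} Consider the order-$q$ closed loop on $(t_{s}-\delta,t_s)$. By \eqref{eq:smooth_error_first_derivative} and \eqref{eq:smooth_integrators}, repeated differentiation gives, for $r=1,\dots,q-1$,
\begin{align*}
e^{(r)}=-\sum_{l=0}^{q-1}\kappa_l\,\sigma_l^{(r-1)}-\mathbf 1x_0^{(r)},
\end{align*}
where $\sigma_l^{(m)}=(L+A_0)e^{(m-l)}$ for $m\ge l$ and $\sigma_l^{(m)}=\sigma_{l-m}$ for $m<l$. Hence the map from $\big(e,\sigma_1,\dots,\sigma_{q-1}\big)$ to $\eta$ is affine in $w$. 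It is block lower-triangular in the following sense: $e^{(r)}$ involves $\sigma_{q-1},\dots,\sigma_{q-r}$ only through the coefficient $-\kappa_{q-1}\sigma_{q-r}$ plus terms already determined. Concretely, $\sigma_{l}$ with the largest index enters $e^{(q-l)}$ first, with coefficient $-\kappa_{l+\cdot}$ times the identity.

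Since every $\kappa_l>0$, this triangular system can be inverted. We recover, in order, $\sigma_{q-1}$ from $e^{(1)}$, then $\sigma_{q-2}$ from $e^{(2)}$, and so on. This gives $\sigma_l(t_s^-)=\Phi_l\eta(t_s^-)+\Psi_l w(t_s)$ for constant matrices $\Phi_l$ and $\Psi_l$. Here we use that $x_0$ is smooth, so $w$ is continuous at $t_s$.

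\emph{Step 2: apply the reset and evaluate after the switch.} By \eqref{eq:smooth_reset}, $\sigma_l(t_s^+)=\sigma_l(t_s^-)$ for $l\le q-1$ and $\sigma_q(t_s^+)=0$. Also $e(t_s^+)=e(t_s^-)$. Apply the same differentiation formulas to the order-$(q+1)$ loop on $(t_s,t_s+\delta)$, now with $r=1,\dots,q$ and gains $\kappa_0,\dots,\kappa_q$. Each $e^{(r)}(t_s^+)$ is then an explicit linear combination of $e(t_s)$, of $\sigma_1(t_s^+),\dots,\sigma_q(t_s^+)$, and of $x_0^{(1)}(t_s),\dots,x_0^{(q)}(t_s)$.

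\emph{Step 3: compose.} Substituting the expressions from Step 1 into those of Step 2 produces $\eta(t_s^+)=\mathcal R_q\eta(t_s^-)+\mathcal S_qw(t_s)$. The columns of $\mathcal S_q$ beyond index $q$ are zero. Uniqueness is automatic, since the right-hand side is determined by the post-switch dynamics and continuity.

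The main obstacle is Step 1: verifying that the triangular inversion is well defined and keeping the index bookkeeping straight. We would handle it by writing the map $(e,\sigma_{q-1},\dots,\sigma_1)\mapsto(e,\dot e,\dots,e^{(q-1)})$ explicitly as a block-triangular matrix. Its diagonal blocks are $I_n$ and $-\kappa_{q-1}I_n$, which gives invertibility directly. The case $q=1$ is trivial: $\eta=e$ and no inversion is needed.
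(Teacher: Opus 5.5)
Your overall plan (invert the pre-switch relations to recover the integral states, apply the reset, evaluate the order-$(q+1)$ relations, then compose) is the same three-step argument the paper uses. Steps~2 and~3 are fine, including the observation that the columns of $\mathcal S_q$ beyond index $q$ vanish. The flaw is in Step~1, which you rightly identify as the crux: the triangular structure you describe is upside down.

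By your own formula $\sigma_l^{(r-1)}=\sigma_{l-r+1}$ for $l\geq r$, the integral states entering $e^{(r)}$ are $\sigma_1,\dots,\sigma_{q-r}$, with $\sigma_{q-r}$ carrying the coefficient $\kappa_{q-1}$. They are not $\sigma_{q-1},\dots,\sigma_{q-r}$. In particular, $e^{(1)}=-\kappa_0(L+A_0)e-\sum_{l=1}^{q-1}\kappa_l\sigma_l-\mathbf 1\dot x_0$ contains every integral state. So for $q\geq3$ your first move, recovering $\sigma_{q-1}$ from $e^{(1)}$, is impossible: $\sigma_1,\dots,\sigma_{q-2}$ are still unknown at that stage. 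The recursion has to run in the opposite direction, as in the paper. First, $e^{(q-1)}$ involves only $\sigma_1$, with coefficient $\kappa_{q-1}$. Next, $e^{(q-2)}$ involves only $\sigma_1,\sigma_2$. Continuing in this way, $e^{(1)}$ finally yields $\sigma_{q-1}$. For $q\leq2$ the two orders coincide, so the error only matters when $q_{\max}\geq4$.

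Your fallback, writing $(e,\sigma_{q-1},\dots,\sigma_1)\mapsto(e,\dot e,\dots,e^{(q-1)})$ as a block-triangular matrix with diagonal blocks $I_n$ and $-\kappa_{q-1}I_n$, is also false as stated. After full substitution this map is not block triangular. For $q=3$ and $K\triangleq L+A_0$, one finds $\ddot e=(\kappa_0^2K^2-\kappa_1K)e+(\kappa_0\kappa_1K-\kappa_2I_n)\sigma_1+\kappa_0\kappa_2K\sigma_2+(\text{target terms})$. Both states appear in $\ddot e$, and the would-be pivot is not $-\kappa_2I_n$.

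Triangularity appears only once you treat the lower derivatives as known data, that is, pass to $y_r\triangleq e^{(r)}+\sum_{l=0}^{r-1}\kappa_l(L+A_0)e^{(r-1-l)}+\mathbf 1x_0^{(r)}=-\sum_{l=r}^{q-1}\kappa_l\sigma_{l-r+1}$. Each $y_r$ is a known linear function of $\eta(t_s^-)$ and $w(t_s)$. The map $(\sigma_1,\dots,\sigma_{q-1})\mapsto(y_{q-1},\dots,y_1)$ is block lower triangular with diagonal blocks $-\kappa_{q-1}I_n$. Forward substitution in this system is exactly the paper's recursion, and with this correction the rest of your argument goes through.
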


\begin{IEEEproof}
    The proof proceeds in three steps. First, we recover the pre-switch
    integral states from $\eta(t_s^-)$ and $w(t_s)$. From
    Equations~\eqref{eq:smooth_sigma0}--\eqref{eq:smooth_integrators} and
    $\dot{\bar x}(t)=\mathbf 1\dot x_0(t)$, we have
    \begin{align*}
        \dot e(t)
        =-\kappa_0(L+A_0)e(t)
        -\sum_{l=1}^{q-1}\kappa_l\sigma_l(t)-\mathbf 1\dot x_0(t).
    \end{align*}
    Differentiating this equation $j-1$ times and using
    $\dot\sigma_l(t)=\sigma_{l-1}(t)$ and
    $\sigma_0(t)=(L+A_0)e(t)$ yields, for
    $j=1,\dots,q-1$,
    \begin{align}
        e^{(j)}(t_s^-)
        &=-\sum_{l=0}^{j-1}\kappa_l(L+A_0)
        e^{(j-1-l)}(t_s^-)\nonumber\\
        &\quad-\sum_{l=j}^{q-1}\kappa_l
        \sigma_{l-j+1}(t_s^-)-\mathbf 1x_0^{(j)}(t_s).
        \label{eq:smooth_pre_switch_derivatives}
    \end{align}
    The quantities $e^{(j)}(t_s^-)$ and $x_0^{(j)}(t_s)$ in this identity are
    components of $\eta(t_s^-)$ and $w(t_s)$, respectively. For $q\geq2$,
    the integral states can therefore be recovered recursively. In
    particular, setting $j=q-1$ yields
    \begin{align*}
        \kappa_{q-1}\sigma_1(t_s^-)
        &=-e^{(q-1)}(t_s^-)
        -\sum_{l=0}^{q-2}\kappa_l(L+A_0)
        e^{(q-2-l)}(t_s^-)\\
        &\quad-\mathbf 1x_0^{(q-1)}(t_s),
    \end{align*}
    which uniquely determines $\sigma_1(t_s^-)$ because
    $\kappa_{q-1}>0$. Now suppose that
    $\sigma_1(t_s^-),\dots,\sigma_{m-1}(t_s^-)$ have already been recovered.
    Setting $j=q-m$ in~\eqref{eq:smooth_pre_switch_derivatives}, the second
    sum becomes
    \begin{align*}
        \kappa_{q-1}\sigma_m(t_s^-)
        +\sum_{l=q-m}^{q-2}\kappa_l
        \sigma_{l-q+m+1}(t_s^-).
    \end{align*}
    All terms in this expression except $\sigma_m(t_s^-)$ are known from the
    preceding recursion. Thus, $\sigma_m(t_s^-)$ is also uniquely determined.
    Repeating this argument for $m=2,\dots,q-1$ recovers all pre-switch
    integral states. If $q=1$, the entire reconstruction step is omitted
    because the controller has no integral state before the switch.

    Second, we construct the post-switch derivative state. The continuity of
    $e$ and the reset rule~\eqref{eq:smooth_reset} imply
    \begin{align*}
        e(t_s^+)&=e(t_s^-),\\
        \sigma_l(t_s^+)&=\sigma_l(t_s^-),\quad l=1,\dots,q-1,\\
        \sigma_q(t_s^+)&=0.
    \end{align*}
    Replacing $q$ with $q+1$ in
    \eqref{eq:smooth_pre_switch_derivatives} and evaluating the resulting
    identity at $t_s^+$ yields, for $j=1,\dots,q$,
    \begin{align}
        e^{(j)}(t_s^+)
        &=-\sum_{l=0}^{j-1}\kappa_l(L+A_0)
        e^{(j-1-l)}(t_s^+) \nonumber \\
        &\quad-\sum_{l=j}^{q}\kappa_l
        \sigma_{l-j+1}(t_s^+)-\mathbf 1x_0^{(j)}(t_s).
        \label{eq:smooth_post_switch_derivatives}
    \end{align}
    Setting $j=1$ in~\eqref{eq:smooth_post_switch_derivatives} gives
    \begin{align}
        e^{(1)}(t_s^+)
        &=-\kappa_0(L+A_0)e(t_s^+)
        -\sum_{l=1}^{q}\kappa_l\sigma_l(t_s^+)
        -\mathbf 1\dot x_0(t_s) \nonumber \\
        &=-\kappa_0(L+A_0)e(t_s^-)
        -\sum_{l=1}^{q-1}\kappa_l\sigma_l(t_s^-)
        -\mathbf 1\dot x_0(t_s),
        \label{eq:smooth_post_switch_first_derivative}
    \end{align}
    where the second equality follows from the continuity of $e$, the
    preservation of the existing integral states, and
    $\sigma_q(t_s^+)=0$. All quantities in the final expression of
    \eqref{eq:smooth_post_switch_first_derivative} are known
    from the first step and $w(t_s)$, so $e^{(1)}(t_s^+)$ is uniquely
    determined. Moreover, comparison with
    \eqref{eq:smooth_error_first_derivative} evaluated at $t_s^-$ shows
    that $e^{(1)}(t_s^+)=e^{(1)}(t_s^-)$. Substituting this result into the
    $j=2$ case of~\eqref{eq:smooth_post_switch_derivatives} determines
    $e^{(2)}(t_s^+)$. Proceeding successively through $j=q$ determines the
    entire vector $\eta(t_s^+)$.

    Finally, both reconstruction steps involve only linear operations.
    Specifically, solving
    \eqref{eq:smooth_pre_switch_derivatives} successively with
    $j=q-1,q-2,\dots,1$ expresses each pre-switch integral state
    $\sigma_m(t_s^-)$, $m=1,\dots,q-1$, as a linear combination of
    $\eta(t_s^-)$ and $w(t_s)$. Next, substituting these expressions and the
    reset relations in~\eqref{eq:smooth_reset} into
    \eqref{eq:smooth_post_switch_derivatives}, and evaluating the latter
    successively for $j=1,\dots,q$, expresses every component of
    $\eta(t_s^+)$ as a linear combination of $\eta(t_s^-)$ and $w(t_s)$.
    Collecting the corresponding coefficients therefore gives matrices
    $\mathcal R_q$ and $\mathcal S_q$ satisfying
    \begin{align*}
        \eta(t_s^+)
        =\mathcal R_q\eta(t_s^-)+\mathcal S_qw(t_s)
    \end{align*}
    at every switch from order $q$ to order $q+1$, which proves the lemma.
\end{IEEEproof}

    Lemma~\ref{lemm:smooth_reset_map} shows that increasing the controller
    order does not introduce an independent jump variable into the
    derivative-state dynamics in \eqref{eq:smooth_state_equation}. Once the pre-switch derivative state and the
    target derivatives at the switching instant are known, the reset
    rule~\eqref{eq:smooth_reset} determines the complete post-switch
    derivative state, which admits a linear representation in terms of these
    quantities. Thus, although the dimension of $\eta$ increases from $nq$ to
    $n(q+1)$, no additional independent state is introduced at the switch.
    This representation connects the bumpless controller implementation with
    the switched state-space stability analysis that follows.
    \label{rem:smooth_reset_map}

\section{Stability}
\label{sec:oadic_stability}

This section establishes the stability of OADIC by
analyzing the continuous dynamics associated with each active controller
order together with the state-reset relation and the relative-reduction
switching law.
\begin{theor}
    \label{theor:adaptive_dhic}
    Consider system~\eqref{eq:agent_dynamics} under OADIC, defined by the
    controller
    \eqref{eq:smooth_control}--\eqref{eq:smooth_integrators}, the monitor
    \eqref{eq:smooth_local_error}--\eqref{eq:smooth_relative_reduction}, the
    switching law~\eqref{eq:smooth_switching_rule}, and the reset
    rule~\eqref{eq:smooth_reset}. Suppose that
    Assumption~\ref{assump_graph} holds, Problem~\ref{probl_1} is consistent,
    $q_{\max}\geq2$, $x_0\in C^{q_{\max}}$, and the gains satisfy
    \eqref{eq:smooth_gain_design}. Then, the following statements hold.
    \begin{enumerate}[(i)]
    \item The state matrix $\mathcal A_q$ in
    \eqref{eq:smooth_state_equation} is Hurwitz for every $q\in\mathcal Q$.

    \item The complete switched derivative-state dynamics, consisting of the
    continuous evolution~\eqref{eq:smooth_state_equation} between switching
    instants and the state jumps induced by the reset rule
    \eqref{eq:smooth_reset}, is uniformly input-to-state stable with state
    $\eta(t)$ and input $w(t)$. In particular,
    \begin{align}
        \|\eta(t)\|&\leq C_{\mathrm{ISS}}\exp\left(-\frac{a}{2}t\right)
        \|\eta(0)\|+\gamma_{\mathrm{ISS}}\sup_{0\leq\tau\leq t}\|w(\tau)\|,
        \label{eq:smooth_switched_bound}
    \end{align}
    where
    \begin{align*}
        C_{\mathrm{ISS}}&\triangleq
        \left(
        \frac{\displaystyle\max_{q\in\mathcal Q}\lambda_{\max}(P_q)}
        {\displaystyle\min_{q\in\mathcal Q}\lambda_{\min}(P_q)}
        \mu^{q_{\max}-1}\right)^{1/2},\\
        \gamma_{\mathrm{ISS}}&\triangleq
        \left(
        \frac{G}
        {\displaystyle\min_{q\in\mathcal Q}\lambda_{\min}(P_q)}
        \right)^{1/2},\\
        P_q&\triangleq\int_0^\infty
        \exp(\mathcal A_q^T\tau)\exp(\mathcal A_q\tau)\mathrm d\tau,\\
        a&\triangleq
        \min_{q\in\mathcal Q}\frac{1}
        {2\lambda_{\max}(P_q)},\\
        \mu&\triangleq
        \max_{1\leq q<q_{\max}}
        \max\left\{1,
        \frac{2\lambda_{\max}
        (\mathcal R_q^TP_{q+1}\mathcal R_q)}
        {\lambda_{\min}(P_q)}\right\},\\
        G&\triangleq
        \frac{2\mu^{q_{\max}-1}}{a}
        \max_{q\in\mathcal Q}\|P_q\mathcal B_q\|^2\nonumber\\
        &\quad+2\max_{1\leq q<q_{\max}}
        \left\|P_{q+1}^{1/2}\mathcal S_q\right\|^2
        \sum_{j=0}^{q_{\max}-2}\mu^j,
    \end{align*}
    and  $\mathcal R_q$ and $\mathcal S_q$ are the reset-map matrices in
    Lemma~\ref{lemm:smooth_reset_map}.

    \item The final order
    $q_\infty\triangleq\lim_{k\to\infty}q_k\in\mathcal Q$ exists. If
    $q_\infty<q_{\max}$, then, for some $K\in\mathbb Z^+$ and all $k\geq K$,
    \begin{align}
        \|e(t_k)\|
        \leq\sqrt{\frac{n}{\lambda_1}}J(t_k)\leq\sqrt{\frac{n}{\lambda_1}}J(t_K)
        \prod_{\ell=K}^{k-1}(1-\varepsilon_\ell).
        \label{eq:smooth_unsaturated_error_bound}
    \end{align}
    Hence, $e(t_k)\to0$ if
    $\sum_{\ell=0}^{\infty}\varepsilon_\ell=\infty$; under
    \eqref{eq:smooth_threshold_schedule}, this holds if
    $\varepsilon_\infty>0$, or if
    $\varepsilon_\infty=0$ and $0<\alpha\leq1$.

    If $q_\infty=q_{\max}$, then
    \begin{align}
        \limsup_{t\to\infty}\|e(t)\|
        &\leq
        \left(\int_0^\infty
        \|\exp(\mathcal A_{q_{\max}}s)\mathcal B_{q_{\max}}\|
        \mathrm ds\right)\nonumber\\
        &\quad\times
        \limsup_{t\to\infty}|x_0^{(q_{\max})}(t)|.
        \label{eq:smooth_saturated_bound}
    \end{align}
    For either final order, $x_0^{(q_\infty)}(t)\to0$ implies $e(t)\to0$.
    \end{enumerate}
\end{theor}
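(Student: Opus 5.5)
For (i), we diagonalize $L+A_0=U\Lambda U^T$ with $U$ orthogonal. Then $\mathcal A_q$ is similar to a block-diagonal matrix whose $i$th block is the companion matrix of
\begin{align*}
p_{q,i}(s)=s^q+\kappa_0\lambda_i s^{q-1}+\kappa_1\lambda_i s^{q-2}+\cdots+\kappa_{q-1}\lambda_i ,
\end{align*}
so it suffices to show that every such polynomial is Hurwitz. For $q=1$ the root is $-\kappa_0\lambda_i<0$. For $q\geq2$, we apply Lemma~\ref{lemm:kurtz} to the coefficient sequence $a_q=1$, $a_{q-1}=\kappa_0\lambda_i$, $a_{q-2}=\kappa_1\lambda_i,\dots,a_0=\kappa_{q-1}\lambda_i$, all of which are positive.

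The Kurtz inequalities take two forms:
\begin{itemize}
\item at the top, $(\kappa_0\lambda_i)^2>4\kappa_1\lambda_i$, i.e.\ $\kappa_1<\lambda_i\kappa_0^2/4$, which follows from \eqref{eq:smooth_gain_design} because $\lambda_i\geq\lambda_1$;
\item in the interior, $(\kappa_l\lambda_i)^2>4\kappa_{l-1}\lambda_i\kappa_{l+1}\lambda_i$, i.e.\ $\kappa_l^2>4\kappa_{l-1}\kappa_{l+1}$, which is exactly the recursion in \eqref{eq:smooth_gain_design}.
\end{itemize}
Hence all roots are real and distinct. A polynomial with positive coefficients has no nonnegative real roots, so every root is negative and $\mathcal A_q$ is Hurwitz. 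Because the gains are nested, the same argument covers every prefix $q\in\mathcal Q$.

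For (ii), we use the mode-dependent Lyapunov function $V_q(\eta)=\eta^TP_q\eta$, where $P_q$ solves $\mathcal A_q^TP_q+P_q\mathcal A_q=-I$. Between switches, Young's inequality gives
\begin{align*}
\dot V\leq-\|\eta\|^2+2\|P_q\mathcal B_q\|\|\eta\||x_0^{(q)}|\leq -\tfrac12\|\eta\|^2+2\|P_q\mathcal B_q\|^2\|w\|^2 .
\end{align*}
Using $\|\eta\|^2\geq V/\lambda_{\max}(P_q)$, this becomes $\dot V\leq -aV+2\max_q\|P_q\mathcal B_q\|^2\|w\|^2$.

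At a switch, Lemma~\ref{lemm:smooth_reset_map} together with $(x+y)^TP(x+y)\leq2x^TPx+2y^TPy$ gives
\begin{align*}
V_{q+1}(\eta^+)\leq \mu V_q(\eta^-)+2\|P_{q+1}^{1/2}\mathcal S_q\|^2\|w(t_s)\|^2 .
\end{align*}
The order only increases, so there are at most $q_{\max}-1$ switches on any trajectory. Integrating the flow inequality across intervals and applying the jump inequality at most $q_{\max}-1$ times yields
\begin{align*}
V(t)\leq\mu^{q_{\max}-1}e^{-at}V(0)+G\sup_{\tau\leq t}\|w(\tau)\|^2 .
\end{align*}
Here the flow contributions accumulate to at most $\frac{2}{a}\mu^{q_{\max}-1}\max_q\|P_q\mathcal B_q\|^2$, and the jump contributions to at most the geometric sum $\sum_j\mu^j$ times $2\max_q\|P_{q+1}^{1/2}\mathcal S_q\|^2$. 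Sandwiching $V$ between $\min_q\lambda_{\min}(P_q)\|\eta\|^2$ and $\max_q\lambda_{\max}(P_q)\|\eta\|^2$ and taking square roots gives \eqref{eq:smooth_switched_bound} with the stated constants. The main bookkeeping obstacle is tracking how the $\mu$ factors multiply the forcing terms across switches. This is why the worst-case factor $\mu^{q_{\max}-1}$ with $\mu\geq1$ is used uniformly.

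For (iii), $q_k$ is nondecreasing and bounded by $q_{\max}$, so it is eventually constant at $q_\infty$.

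If $q_\infty<q_{\max}$, pick $K$ such that $q_k=q_\infty$ for all $k\geq K$. No switch occurs afterwards, so by \eqref{eq:smooth_switching_rule} we have $r_k>\varepsilon_k$ for every $k\geq K$. We check the cases of \eqref{eq:smooth_relative_reduction}:
\begin{itemize}
\item if $J(t_k^+)>0$, then $J(t_{k+1}^-)<(1-\varepsilon_k)J(t_k^+)$;
\item if $J(t_k^+)=0$, then $r_k>\varepsilon_k$ rules out $r_k=-\infty$, so $J(t_{k+1}^-)=0$.
\end{itemize}
Since $J$ is continuous (no jumps in $e$), we may drop the one-sided limits. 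Induction then gives the product bound, and Lemma~\ref{lemm:smooth_monitor_equivalence} converts it into \eqref{eq:smooth_unsaturated_error_bound}. Using $1-\varepsilon\leq e^{-\varepsilon}$, the product tends to zero whenever $\sum\varepsilon_\ell=\infty$. For the schedule \eqref{eq:smooth_threshold_schedule}, this holds if $\varepsilon_\infty>0$, or if $\varepsilon_\infty=0$ and $\alpha\leq1$, by divergence of the harmonic-type series.

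If $q_\infty=q_{\max}$, then after the last switch the system is the fixed LTI system \eqref{eq:smooth_state_equation} with Hurwitz $\mathcal A_{q_{\max}}$. Write the variation-of-constants formula; the homogeneous part decays. For the convolution term, split the integral into a recent window and the far past, whose contribution is killed by the exponential decay. This gives the limsup bound \eqref{eq:smooth_saturated_bound} for $\|e\|\leq\|\eta\|$.

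Finally, if $x_0^{(q_\infty)}\to0$, the same convolution argument applied to the final mode $q_\infty$ gives $e\to0$. This holds in either case, after noting that $\eta$ is bounded by (ii).
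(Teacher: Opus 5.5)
Your proposal is correct and follows the paper's proof essentially step for step: modal decoupling of $\mathcal A_q$ in the eigenbasis of $L+A_0$ plus Lemma~\ref{lemm:kurtz} for (i); the mode-dependent functions $V_q=\eta^TP_q\eta$ with the flow estimate, the reset estimate through $\mu$ and $\mathcal S_q$, and at most $q_{\max}-1$ switches for (ii); and eventual constancy of $q_k$, endpoint contraction, and variation of constants after the last switch for (iii). The only blemish is your closing appeal to boundedness of $\eta$ from (ii): (ii) does not give it in general, since lower-order entries of $w$ such as $\dot x_0$ may be unbounded (e.g., for polynomial targets), and it is not needed, because the final convolution argument only uses that $\eta$ is finite at the last switching time.
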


\begin{IEEEproof}
    \emph{(i)}
    By Assumption~\ref{assump_graph}, $L+A_0$ is
    symmetric and positive definite. Hence, there exists an orthogonal matrix
    $U$ such that
    \begin{align*}
        U^T(L+A_0)U=\Lambda\triangleq
        \operatorname{diag}(\lambda_1,\dots,\lambda_n),
    \end{align*}
    where $0<\lambda_1\leq\cdots\leq\lambda_n$ are the eigenvalues of
    $L+A_0$. 
    The Hurwitz property concerns the zero-input dynamics
    $\dot\eta=\mathcal A_q\eta$. Therefore, when analyzing the eigenvalues of
    $\mathcal A_q$, we set the input $x_0^{(q)}(t)$ in
    \eqref{eq:smooth_state_equation} to zero. The corresponding zero-input form of
    \eqref{eq:smooth_error_dynamics} is
    \begin{align}
        e^{(q)}(t)+\sum_{l=0}^{q-1}
        \kappa_l(L+A_0)e^{(q-1-l)}(t)=0.
        \label{eq:smooth_homogeneous_error}
    \end{align}

    Introduce the transformed error $z(t)\triangleq U^Te(t)$. Multiplying
    \eqref{eq:smooth_homogeneous_error} by $U^T$ and using
    $U^T(L+A_0)U=\Lambda$ gives
    \begin{align*}
        z^{(q)}(t)+\sum_{l=0}^{q-1}
        \kappa_l\Lambda z^{(q-1-l)}(t)=0.
    \end{align*}
    Because $\Lambda$ is diagonal, the $n$ coordinates of $z$ are decoupled.
    In particular, the coordinate associated with the eigenvalue $\lambda_i$
    satisfies
    \begin{align}
        z_i^{(q)}(t)
        +\lambda_i\sum_{l=0}^{q-1}
        \kappa_l z_i^{(q-1-l)}(t)=0.
        \label{eq:smooth_scalar_mode}
    \end{align}
    Hence, each transformed coordinate $z_i(t)$ evolves independently of the
    others according to~\eqref{eq:smooth_scalar_mode}.

    To express~\eqref{eq:smooth_scalar_mode} as a first-order system, define
    \begin{align*}
        \xi_i(t)\triangleq
        \begin{bmatrix}
            z_i(t)&\dot z_i(t)&\cdots&z_i^{(q-1)}(t)
        \end{bmatrix}^T.
    \end{align*}
    For $q\geq2$, equation~\eqref{eq:smooth_scalar_mode} is equivalent to
    $\dot\xi_i(t)=C_{q,i}\xi_i(t)$, where
    \begin{align*}
        C_{q,i}\triangleq
        \begin{bmatrix}
            0_{(q-1)\times1}&I_{q-1}\\
            -\kappa_{q-1}\lambda_i&
            -\lambda_i
            \begin{bmatrix}\kappa_{q-2}&\cdots&\kappa_0\end{bmatrix}
        \end{bmatrix}.
    \end{align*}
    When $q=1$, the corresponding state matrix is
    $C_{1,i}=-\kappa_0\lambda_i$. The transformation
    $z(t)=U^Te(t)$, followed by a reordering that groups the derivatives of each
    $z_i(t)$, transforms $\mathcal A_q$ into
    $\operatorname{diag}(C_{q,1},\dots,C_{q,n})$. Thus, the eigenvalues of
    $\mathcal A_q$ are precisely the eigenvalues of these scalar-coordinate
    state matrices.

    To calculate the eigenvalues of $C_{q,i}$, let $s\in\mathbb C$ and seek a
    nonzero solution of~\eqref{eq:smooth_scalar_mode} in the form
    $z_i(t)=c\exp(st)$, where $c\neq0$. Substituting
    $z_i^{(r)}(t)=s^rc\exp(st)$ into
    \eqref{eq:smooth_scalar_mode} and dividing the resulting equation by the
    nonzero factor $c\exp(st)$ gives
    \begin{align}
        p_{q,i}(s)
        &\triangleq s^q+\kappa_0\lambda_i s^{q-1}
        +\kappa_1\lambda_i s^{q-2}
        +\cdots+\kappa_{q-1}\lambda_i=0.
        \label{eq:smooth_modal_polynomial}
    \end{align}
    Therefore, $p_{q,i}(s)=0$ is the characteristic equation of $C_{q,i}$,
    and its roots are the eigenvalues of $C_{q,i}$. 

    It remains to show that all these roots are negative. For $q=1$,
    $p_{1,i}(s)=s+\kappa_0\lambda_i$, whose root is
    $-\kappa_0\lambda_i<0$. Now let $q\geq2$ and write
    \begin{align*}
        p_{q,i}(s)&=\sum_{m=0}^qa_ms^m,\\
        a_q&=1,\qquad
        a_{q-1-l}=\kappa_l\lambda_i,\quad l=0,\dots,q-1.
    \end{align*}
    By the ordering of the eigenvalues and the first inequality in
    \eqref{eq:smooth_gain_design},
    \begin{align*}
        \kappa_1
        <\frac{\lambda_1}{4}\kappa_0^2
        \leq\frac{\lambda_i}{4}\kappa_0^2.
    \end{align*}
    Since $\lambda_i>0$, multiplying the last inequality by
    $4\lambda_i$ yields
    \begin{align*}
        \kappa_0^2\lambda_i^2-4\kappa_1\lambda_i>0.
    \end{align*}
    Using $a_q=1$, $a_{q-1}=\kappa_0\lambda_i$, and
    $a_{q-2}=\kappa_1\lambda_i$, this inequality becomes
    \begin{align*}
        a_{q-1}^2-4a_{q-2}a_q>0.
    \end{align*}
    For $l=1,\dots,q-2$, the remaining gain inequalities give
    \begin{align*}
        a_{q-1-l}^2-4a_{q-2-l}a_{q-l}
        =\lambda_i^2
        (\kappa_l^2-4\kappa_{l-1}\kappa_{l+1})>0.
    \end{align*}
    Therefore,
    $a_m^2-4a_{m-1}a_{m+1}>0$ for every $m=1,\dots,q-1$.
    Lemma~\ref{lemm:kurtz} therefore implies that all roots of $p_{q,i}$ are
    real. We next determine their signs. Since every coefficient $a_m$ is
    positive,
    \begin{align*}
        p_{q,i}(0)=a_0=\kappa_{q-1}\lambda_i>0,
    \end{align*}
    whereas a root at zero would require $p_{q,i}(0)=0$. Therefore, zero is
    not a root. Next, for any $s>0$, each term $a_ms^m$ is positive, so
    \begin{align*}
        p_{q,i}(s)=\sum_{m=0}^qa_ms^m>0.
    \end{align*}
    A positive number $s$ could be a root only if $p_{q,i}(s)=0$; the above
    inequality therefore excludes every $s>0$. Hence, no root belongs to
    $[0,\infty)$. Because all roots are real, every root must lie in
    $(-\infty,0)$ and is therefore strictly negative. As established
    above, the eigenvalues of $\mathcal A_q$ are precisely the roots of
    $p_{q,i}$ over $i=1,\dots,n$. Consequently, all eigenvalues of
    $\mathcal A_q$ are strictly negative. Since $q\in\mathcal Q$ was
    arbitrary, $\mathcal A_q$ is Hurwitz for every $q\in\mathcal Q$.
    
    \emph{(ii)}
    Between two consecutive switching instants, the order is constant and the
    state equation~\eqref{eq:smooth_state_equation} applies.
    Part (i) shows that $\mathcal A_q$ is Hurwitz. Hence, there exist constants
    $M_q>0$ and $\beta_q>0$ such that
    \begin{align*}
        \|\exp(\mathcal A_q\tau)\|
        \leq M_q\exp(-\beta_q\tau),\qquad \tau\geq0.
    \end{align*}
    This exponential bound guarantees that the integral defining $P_q$
    converges and therefore that $P_q$ is a finite symmetric matrix. Consequently, for every
    nonzero vector $y$,
    \begin{align*}
        y^TP_qy
        =\int_0^\infty\|\exp(\mathcal A_q\tau)y\|^2\mathrm d\tau>0,
    \end{align*}
    where strict positivity follows because the integrand is continuous and
    equals $\|y\|^2>0$ at $\tau=0$. Thus, $P_q$ is positive definite.
    Moreover,
    \begin{align*}
        \mathcal A_q^TP_q+P_q\mathcal A_q
        &=\int_0^\infty
        \left[\mathcal A_q^T\exp(\mathcal A_q^T\tau)
        \exp(\mathcal A_q\tau)\right.\\
        &\qquad\left.+\exp(\mathcal A_q^T\tau)
        \exp(\mathcal A_q\tau)\mathcal A_q\right]\mathrm d\tau\\
        &=\int_0^\infty\frac{\mathrm d}{\mathrm d\tau}
        \left[\exp(\mathcal A_q^T\tau)
        \exp(\mathcal A_q\tau)\right]\mathrm d\tau\\
        &=\left[\exp(\mathcal A_q^T\tau)
        \exp(\mathcal A_q\tau)\right]_{0}^{\infty}\\
        &=0-I=-I.
    \end{align*}
    The first equality follows by substituting the integral definition of
    $P_q$ and using the commutation identities
    $\mathcal A_q\exp(\mathcal A_q\tau)
    =\exp(\mathcal A_q\tau)\mathcal A_q$ and their transposes. The product
    rule gives the second equality. Finally,
    $\exp(\mathcal A_q\tau)\to0$ as $\tau\to\infty$ because
    $\mathcal A_q$ is Hurwitz, whereas both exponential matrices equal the
    identity at $\tau=0$.

    Define $V_q(t)\triangleq\eta^T(t)P_q\eta(t)$. Between two consecutive
    switching instants,
    \begin{align}
        \dot V_q(t)
        &=\eta^T(t)(\mathcal A_q^TP_q+P_q\mathcal A_q)\eta(t)
        +2\eta^T(t)P_q\mathcal B_qx_0^{(q)}(t)\nonumber\\
        &=-\|\eta(t)\|^2
        +2\eta^T(t)P_q\mathcal{B}_qx_0^{(q)}(t)\nonumber\\
        &\leq-\|\eta(t)\|^2
        +2\|\eta(t)\|\,\|P_q\mathcal B_q\|\,|x_0^{(q)}(t)|\nonumber\\
        &\leq-\frac{1}{2}\|\eta(t)\|^2
        +2\|P_q\mathcal B_q\|^2|x_0^{(q)}(t)|^2\nonumber\\
        &\leq-aV_q(t)+
        2\max_{p\in\mathcal Q}\|P_p\mathcal B_p\|^2\|w(t)\|^2.
        \label{eq:smooth_constructed_V}
    \end{align}
    The first line follows from
    $\dot\eta(t)=\mathcal A_q\eta(t)+\mathcal B_qx_0^{(q)}(t)$. Substitution of the
    Lyapunov identity $\mathcal A_q^TP_q+P_q\mathcal A_q=-I$ gives the
    second line, and the Cauchy--Schwarz inequality gives the third line.
    Young's inequality
    $2xy\leq x^2/2+2y^2$, with
    $x=\|\eta(t)\|$ and
    $y=\|P_q\mathcal B_q\||x_0^{(q)}(t)|$, gives the fourth line. For the last
    line, the definitions of $w$ and $a$ yield
    \begin{align*}
        |x_0^{(q)}(t)|&\leq\|w(t)\|,\\
        \frac{1}{2}\|\eta(t)\|^2
        &\geq\frac{V_q(t)}{2\lambda_{\max}(P_q)}\geq aV_q(t),
    \end{align*}
    and
    $\|P_q\mathcal B_q\|^2
    \leq\max_{p\in\mathcal Q}\|P_p\mathcal B_p\|^2$.

    At a switching instant $t_s$, Lemma~\ref{lemm:smooth_reset_map} gives
    \begin{align*}
        \eta(t_s^+)
        =\mathcal R_{q(t_s^-)}\eta(t_s^-)
        +\mathcal S_{q(t_s^-)}w(t_s).
    \end{align*}
    Since every switch increases the order by one,
    $q(t_s^+)=q(t_s^-)+1$. Substituting the reset relation into the
    Lyapunov function and using
    $\|x+y\|^2\leq2\|x\|^2+2\|y\|^2$ yield
    \begin{align*}
        &\quad V_{q(t_s^+)}(t_s^+) \\
        &=\left\|P_{q(t_s^+)}^{1/2}\eta(t_s^+)\right\|^2\\
        &\leq2\left\|P_{q(t_s^+)}^{1/2}
        \mathcal R_{q(t_s^-)}\eta(t_s^-)\right\|^2+2\left\|P_{q(t_s^+)}^{1/2}
        \mathcal S_{q(t_s^-)}w(t_s)\right\|^2\\
        &=2\eta(t_s^-)^T\mathcal R_{q(t_s^-)}^T
        P_{q(t_s^+)}\mathcal R_{q(t_s^-)}\eta(t_s^-)+2\left\|P_{q(t_s^+)}^{1/2}
        \mathcal S_{q(t_s^-)}\right\|^2\|w(t_s)\|^2\\
    \end{align*}
    The first term satisfies
    \begin{align*}
        &\qquad 2\eta(t_s^-)^T\mathcal R_{q(t_s^-)}^T
        P_{q(t_s^+)}\mathcal R_{q(t_s^-)}\eta(t_s^-)\\
        &\quad\leq
        2\lambda_{\max}\!\left(
        \mathcal R_{q(t_s^-)}^TP_{q(t_s^+)}
        \mathcal R_{q(t_s^-)}\right)\|\eta(t_s^-)\|^2\\
        &\quad\leq
        \frac{2\lambda_{\max}\!\left(
        \mathcal R_{q(t_s^-)}^TP_{q(t_s^+)}
        \mathcal R_{q(t_s^-)}\right)}
        {\lambda_{\min}(P_{q(t_s^-)})}
        V_{q(t_s^-)}(t_s^-)\\
        &\quad\leq\mu V_{q(t_s^-)}(t_s^-),
    \end{align*}
    where the second inequality uses
    $V_{q(t_s^-)}(t_s^-)
    \geq\lambda_{\min}(P_{q(t_s^-)})\|\eta(t_s^-)\|^2$, and the last one
    follows from the definition of $\mu$. Similarly, the second term is
    bounded by
    \begin{align*}
        2\left\|P_{q(t_s^+)}^{1/2}
        \mathcal S_{q(t_s^-)}\right\|^2\|w(t_s)\|^2\leq2\max_{1\leq q<q_{\max}}
        \left\|P_{q+1}^{1/2}\mathcal S_q\right\|^2\|w(t_s)\|^2.
    \end{align*}
    Combining these two bounds gives
    \begin{align}
        V_{q(t_s^+)}(t_s^+)
        &\leq\mu V_{q(t_s^-)}(t_s^-)\nonumber\\
        &\quad+2\max_{1\leq q<q_{\max}}
        \left\|P_{q+1}^{1/2}\mathcal S_q\right\|^2\|w(t_s)\|^2.
        \label{eq:smooth_V_reset_bound}
    \end{align}

    Consider an interval between two consecutive switching instants $t_s$ and
    $t_{s+1}$, and denote the constant order on this interval by $q$.
    Inequality~\eqref{eq:smooth_constructed_V} can be written as
    \begin{align*}
        \dot V_q(t)+aV_q(t)
        \leq2\max_{p\in\mathcal Q}\|P_p\mathcal B_p\|^2\|w(t)\|^2.
    \end{align*}
    Multiplying both sides by the integrating factor $\exp(at)$ gives
    \begin{align*}
        \frac{\mathrm d}{\mathrm dt}\left[\exp(at)V_q(t)\right]
        \leq2\max_{p\in\mathcal Q}\|P_p\mathcal B_p\|^2
        \exp(at)\|w(t)\|^2.
    \end{align*}
    Integrating from $t_s$ to $t_{s+1}$ yields
    \begin{align*}
        &\exp(at_{s+1})V_q(t_{s+1}^-)
        -\exp(at_s)V_q(t_s^+)\\
        &\quad\leq2\max_{p\in\mathcal Q}\|P_p\mathcal B_p\|^2
        \int_{t_s}^{t_{s+1}}\exp(a\tau)\|w(\tau)\|^2\mathrm d\tau.
    \end{align*}
    Dividing by $\exp(at_{s+1})$ gives
    \begin{align}
        V_q(t_{s+1}^-)
        &\leq\exp[-a(t_{s+1}-t_s)]
        V_q(t_s^+)+2\max_{p\in\mathcal Q}\|P_p\mathcal B_p\|^2\nonumber\\
        &\qquad\times
        \int_{t_s}^{t_{s+1}}\exp[-a(t_{s+1}-\tau)]
        \|w(\tau)\|^2
        \mathrm d\tau.
        \label{eq:smooth_V_interval_bound}
    \end{align}

    Now fix $t>0$ and let $N$ be the number of switches on $(0,t]$. Since
    $q(0)=1$ and every switch increases the order by one,
    \begin{align*}
        0\leq N\leq q_{\max}-1.
    \end{align*}
    We now iterate the continuous-evolution estimate
    \eqref{eq:smooth_V_interval_bound} and the reset estimate
    \eqref{eq:smooth_V_reset_bound}. Suppose first that $N\geq1$, and denote
    the switching instants by $0<t_1<\cdots<t_N\leq t$. When tracking only the
    contribution propagated from $V_{q(0)}(0)$, each application of
    \eqref{eq:smooth_V_reset_bound} contributes a factor $\mu$, whereas each
    application of \eqref{eq:smooth_V_interval_bound} contributes the
    exponential-decay factor for that interval. Repeated application of these
    two inequalities therefore gives
    \begin{align*}
        &\mu^N\exp[-a(t-t_N)]\exp[-a(t_N-t_{N-1})]\cdots\\
        &\quad\times\exp[-a(t_2-t_1)]\exp(-at_1)V_{q(0)}(0)\\
        &=\mu^N\exp(-at)V_{q(0)}(0),
    \end{align*}
    because the interval lengths sum to
    $(t-t_N)+(t_N-t_{N-1})+\cdots+(t_2-t_1)+t_1=t$.
    When $N=0$, there is no reset and the same expression reduces to
    $\exp(-at)V_{q(0)}(0)$.
    A continuous-input contribution generated at time $\tau$ is multiplied
    by the remaining exponential-decay factor
    $\exp[-a(t-\tau)]$ and by at most $N$ reset factors. Hence, the sum of all
    continuous-input contributions is bounded by
    \begin{align*}
        &2\mu^N\max_{p\in\mathcal Q}\|P_p\mathcal B_p\|^2
        \int_0^t\exp[-a(t-\tau)]\|w(\tau)\|^2\mathrm d\tau\\
        &\quad\leq\frac{2\mu^N}{a}
        \max_{p\in\mathcal Q}\|P_p\mathcal B_p\|^2
        \sup_{0\leq\tau\leq t}\|w(\tau)\|^2.
    \end{align*}
    Finally, the input term introduced at the $r$th reset passes through
    $N-r$ later resets. Dropping the subsequent exponential-decay factors,
    which are no greater than one, bounds the total reset contribution by
    \begin{align*}
        2\max_{1\leq q<q_{\max}}
        \left\|P_{q+1}^{1/2}\mathcal S_q\right\|^2
        \sum_{j=0}^{N-1}\mu^j
        \sup_{0\leq\tau\leq t}\|w(\tau)\|^2.
    \end{align*}
    The sum is absent when $N=0$. Adding the three contributions gives
    \begin{align*}
        V_{q(t)}(t)
        &\leq\mu^N\exp(-at)V_{q(0)}(0)\\
        &\quad+\frac{2\mu^N}{a}
        \max_{p\in\mathcal Q}\|P_p\mathcal B_p\|^2
        \sup_{0\leq\tau\leq t}\|w(\tau)\|^2\\
        &\quad+2\max_{1\leq q<q_{\max}}
        \left\|P_{q+1}^{1/2}\mathcal S_q\right\|^2\\
        &\qquad\times
        \sum_{j=0}^{N-1}\mu^j
        \sup_{0\leq\tau\leq t}\|w(\tau)\|^2.
    \end{align*}
    Applying $N\leq q_{\max}-1$ and the definition of $G$ reduces this to
    \begin{align*}
        V_{q(t)}(t)
        &\leq\mu^{q_{\max}-1}\exp(-at)V_{q(0)}(0)\\
        &\quad+G\sup_{0\leq\tau\leq t}\|w(\tau)\|^2.
    \end{align*}
    Since
    \begin{align*}
        \left[\min_{p\in\mathcal Q}\lambda_{\min}(P_p)\right]
        \|\eta(t)\|^2
        \leq V_q(t)
        \leq
        \left[\max_{p\in\mathcal Q}\lambda_{\max}(P_p)\right]
        \|\eta(t)\|^2,
    \end{align*}
    the estimate for $V_{q(t)}(t)$ together with these eigenvalue bounds implies
    \begin{align*}
        \|\eta(t)\|^2
        &\leq C_{\mathrm{ISS}}^2\exp(-at)\|\eta(0)\|^2\\
        &\quad+\gamma_{\mathrm{ISS}}^2
        \sup_{0\leq\tau\leq t}\|w(\tau)\|^2.
    \end{align*}
    Taking square roots and using $\sqrt{x+y}\leq\sqrt x+\sqrt y$ proves
    \eqref{eq:smooth_switched_bound}.

    \emph{(iii)}
    Because $q_k$ is integer-valued, nondecreasing, and bounded above by
    $q_{\max}$, there is an integer $K$ such that
    $q_k=q_\infty$ for every $k\geq K$. Suppose first that $q_\infty<q_{\max}$. No switch occurs after $t_K$.
    Therefore, the first case of~\eqref{eq:smooth_switching_rule} must be
    false, and
    \begin{align}
        r_k>\varepsilon_k,\qquad k\geq K.
        \label{eq:smooth_final_order_reduction}
    \end{align}
    The tracking error is continuous at every decision instant, so
    $J(t_k^-)=J(t_k^+)=J(t_k)$. If $J(t_k)>0$, substituting
    \eqref{eq:smooth_relative_reduction} into
    \eqref{eq:smooth_final_order_reduction} gives
    \begin{align}
        J(t_{k+1})<(1-\varepsilon_k)J(t_k).
        \label{eq:smooth_endpoint_contraction}
    \end{align}
    If $J(t_k)=0$, then $J(t_{k+1})$ must also be zero; otherwise
    $r_k=-\infty$ and the switching law would increase the order. Thus,
    \eqref{eq:smooth_endpoint_contraction} holds with a non-strict inequality
    in all cases. Iterating this inequality and applying the lower estimate in
    Lemma~\ref{lemm:smooth_monitor_equivalence} proves
    \eqref{eq:smooth_unsaturated_error_bound}.

    Since $0<\varepsilon_k<1$,
    $\log(1-\varepsilon_k)\leq-\varepsilon_k$. Hence,
    \begin{align*}
        \prod_{\ell=K}^{k-1}(1-\varepsilon_\ell)
        \leq
        \exp\left(-\sum_{\ell=K}^{k-1}\varepsilon_\ell\right).
    \end{align*}
    The product therefore tends to zero whenever
    $\sum_k\varepsilon_k=\infty$. For the schedule
    \eqref{eq:smooth_threshold_schedule}, this condition holds if
    $\varepsilon_\infty>0$. When $\varepsilon_\infty=0$, it reduces to the
    divergence of the $p$-series
    $\sum_k\varepsilon_0/(k+1)^\alpha$, which occurs exactly for
    $0<\alpha\leq1$. This proves the endpoint convergence assertion.

    Now suppose that $q_\infty=q_{\max}$, and choose $T$ after the last
    switch. Then $q(t)=q_{\max}$ for every $t\geq T$, so
    \eqref{eq:smooth_state_equation} becomes the linear time-invariant system
    \begin{align*}
        \dot\eta(t)=\mathcal A_{q_{\max}}\eta(t)
        +\mathcal B_{q_{\max}}x_0^{(q_{\max})}(t),
        \qquad t\geq T.
    \end{align*}
    Applying the variation-of-constants formula to this system with initial
    state $\eta(T)$ gives
    \begin{align*}
        \eta(t)&=\exp(\mathcal A_{q_{\max}}(t-T))\eta(T)\\
        &\quad+\int_T^t\exp(\mathcal A_{q_{\max}}(t-\tau))
        \mathcal B_{q_{\max}}x_0^{(q_{\max})}(\tau)\mathrm d\tau.
    \end{align*}
    We next convert the asymptotic magnitude of the input
    $x_0^{(q_{\max})}$ into an asymptotic bound on the state. Define
    \begin{align*}
        \bar d\triangleq
        \limsup_{t\to\infty}|x_0^{(q_{\max})}(t)|.
    \end{align*}
    If $\bar d=\infty$, the claimed bound is immediate. Suppose instead that
    $\bar d<\infty$. By the definition of the limit superior, for every
    $\delta>0$ there exists $T_\delta\geq T$ such that
    \begin{align*}
        |x_0^{(q_{\max})}(\tau)|\leq \bar d+\delta,
        \qquad \tau\geq T_\delta.
    \end{align*}
    Applying the variation-of-constants formula from $T_\delta$ yields
    \begin{align*}
        \eta(t)
        &=\exp(\mathcal A_{q_{\max}}(t-T_\delta))\eta(T_\delta)\\
        &\quad+\int_{T_\delta}^t
        \exp(\mathcal A_{q_{\max}}(t-\tau))
        \mathcal B_{q_{\max}}x_0^{(q_{\max})}(\tau)\mathrm d\tau.
    \end{align*}
    Therefore, for $t\geq T_\delta$,
    \begin{align*}
        \|\eta(t)\|
        &\leq
        \|\exp(\mathcal A_{q_{\max}}(t-T_\delta))
        \eta(T_\delta)\|\\
        &\quad+(\bar d+\delta)
        \int_{T_\delta}^t
        \|\exp(\mathcal A_{q_{\max}}(t-\tau))
        \mathcal B_{q_{\max}}\|\mathrm d\tau\\
        &=\|\exp(\mathcal A_{q_{\max}}(t-T_\delta))
        \eta(T_\delta)\|\\
        &\quad+(\bar d+\delta)
        \int_0^{t-T_\delta}
        \|\exp(\mathcal A_{q_{\max}}s)
        \mathcal B_{q_{\max}}\|\mathrm ds\\
        &\leq\|\exp(\mathcal A_{q_{\max}}(t-T_\delta))
        \eta(T_\delta)\|\\
        &\quad+(\bar d+\delta)
        \int_0^\infty
        \|\exp(\mathcal A_{q_{\max}}s)
        \mathcal B_{q_{\max}}\|\mathrm ds,
    \end{align*}
    where the equality uses the change of variable $s=t-\tau$. Because
    $\mathcal A_{q_{\max}}$ is Hurwitz, the first term tends to zero and the
    improper integral is finite. Taking the limit superior gives
    \begin{align*}
        \limsup_{t\to\infty}\|\eta(t)\|
        &\leq(\bar d+\delta)
        \int_0^\infty
        \|\exp(\mathcal A_{q_{\max}}s)\mathcal B_{q_{\max}}\|
        \mathrm ds.
    \end{align*}
    Letting $\delta\downarrow0$, substituting the definition of $\bar d$,
    and using $\|e\|\leq\|\eta\|$ proves the saturated-order
    bound~\eqref{eq:smooth_saturated_bound}.

    Finally, the same argument applies after the order settles at any
    $q_\infty\in\mathcal Q$. In that case, the final subsystem is
    \begin{align*}
        \dot\eta(t)
        =\mathcal A_{q_\infty}\eta(t)
        +\mathcal B_{q_\infty}x_0^{(q_\infty)}(t).
    \end{align*}
    Because $\mathcal A_{q_\infty}$ is Hurwitz, the homogeneous response
    converges to zero. If $x_0^{(q_\infty)}(t)\to0$, split the convolution at
    a sufficiently large time: the contribution over the fixed initial
    interval decays exponentially, while the remaining contribution is
    bounded by
    \begin{align*}
        \sup_{\tau\geq T_\delta}|x_0^{(q_\infty)}(\tau)|
        \int_0^\infty
        \|\exp(\mathcal A_{q_\infty}s)\mathcal B_{q_\infty}\|
        \mathrm ds,
    \end{align*}
    which can be made arbitrarily small. Therefore,
    $\eta(t)\to0$ and hence $e(t)\to0$.
\end{IEEEproof}

\begin{rem}
    Part~(i) shows that every individual order produces a stable continuous
    subsystem. This fact alone would not generally guarantee stability under
    switching, because a reset may increase the Lyapunov function even when
    both the old and new subsystems are stable. Part~(ii) accounts explicitly
    for this increase through the factor $\mu$ and the reset-input matrices
    $\mathcal S_q$. 
\end{rem}

\begin{rem}
    The vector $w(t)$ contains all target derivatives required by the
    admissible controller orders. When order $q$ is active, the continuous
    dynamics~\eqref{eq:smooth_state_equation} are driven only by
    $x_0^{(q)}(t)$, whereas an order increase at $t_s$ introduces the reset
    term $\mathcal S_qw(t_s)$ from Lemma~\ref{lemm:smooth_reset_map}.
    Accordingly, the second term in
    \eqref{eq:smooth_switched_bound} bounds both the continuous forcing and
    the reset-induced state increments. If $w(t)$ is bounded, this term remains
    finite, and hence $\eta(t)$ remains uniformly bounded throughout all
    admissible order transitions. The first term represents the zero-input
    response and decays exponentially from the initial state. Because the
    input term uses the supremum over the entire past, however,
    \eqref{eq:smooth_switched_bound} alone does not imply convergence when
    $w(t)\to0$. Instead, Part~(iii) establishes convergence after the order
    settles: $x_0^{(q_\infty)}(t)\to0$ implies $e(t)\to0$. Finally, the
    constants in \eqref{eq:smooth_switched_bound} account for every
    admissible order and reset, so the estimate provides a uniform robustness bound.
\end{rem}

\begin{rem}
    The condition $q_\infty<q_{\max}$ means that the active order repeatedly
    achieves the relative reduction required by the monitor, so no further
    integral state is activated. It does not, by itself, mean that the target
    has an exact internal model of order $q_\infty$. The monitor compares only
    the two endpoints of each decision interval. Accordingly,
    \eqref{eq:smooth_unsaturated_error_bound} guarantees contraction at the
    decision instants, but it does not exclude a larger error between those
    instants. 
\end{rem}

\begin{rem}
    The equality $q_\infty=q_{\max}$ indicates that the adaptation has reached
    its prescribed complexity limit; it does not indicate loss of stability.
    Bound~\eqref{eq:smooth_saturated_bound} shows that the remaining tracking
    error is governed by two factors: the asymptotic magnitude of the
    unmodeled derivative $x_0^{(q_{\max})}$ and the induced gain of the final
    stable subsystem. In particular, a polynomial target of degree $d$ is
    tracked asymptotically whenever $q_\infty\geq d+1$, since then
    $x_0^{(q_\infty)}=0$. 
\end{rem}

\section{Detection of insufficient orders}
\label{sec:order_detection}


The preceding convergence analysis identifies the admissible final orders,
but it does not yet explain how the switching mechanism recognizes an
insufficient active order or how quickly it does so. For a polynomial target
of degree $d$, the error behavior depends on the degree of under-modeling. If
$q<d$, the unmatched polynomial forcing causes the tracking error to grow
polynomially, whereas the critical order $q=d$ produces a nonzero steady
residual. In either case, the relative improvement $r_k$ in
\eqref{eq:smooth_relative_reduction} eventually falls below the threshold
$\varepsilon_k$ in \eqref{eq:smooth_threshold_schedule}. Therefore, whenever
$q<q_{\max}$, the switching rule~\eqref{eq:smooth_switching_rule} rejects the
active order after finitely many decision intervals. The critical case $q=d$
is the most difficult to detect because its residual remains bounded rather
than growing. We therefore also derive an explicit upper bound on its
rejection time. The following theorem formalizes these claims.

\begin{theor}
    \label{theor:order_detection}
    Suppose that the hypotheses of
    Theorem~\ref{theor:adaptive_dhic} hold, the decision instants are
    uniformly spaced as $t_k=kT_d$, 
    $\varepsilon_\infty>0$, and the target is a polynomial of exact degree
    $d\geq1$, i.e.,
    $x_0(t)=\sum_{r=0}^d a_rt^r$ with $a_d\neq0$.
    Then the following statements hold.
    \begin{enumerate}[(i)]
    \item Every active order $q\leq d$ is rejected after finitely many
    decision intervals unless $q=q_{\max}$. Consequently,
    \begin{align}
        q_\infty\geq\min\{d+1,q_{\max}\}.
        \label{eq:no_underfitting}
    \end{align}

    \item Suppose that order $q=d$ is activated, and let $k_d$ be the first
    decision index such that $q_{k_d}=d$. If $d<q_{\max}$, then its
    rejection time satisfies the explicit bound
    \begin{align*}
        t_{\mathrm{rej},d}
        \leq t_{k_d}+
        \left(\left\lceil\frac{\tau_d}{T_d}\right\rceil+1\right)T_d,
    \end{align*}
    where $\tau_d$ is a constant defined in
    \eqref{eq:critical_detection_time}. In particular, order $d$ is rejected
    no later than one decision interval after the first decision instant not
    smaller than $t_{k_d}+\tau_d$.
    \end{enumerate}
\end{theor}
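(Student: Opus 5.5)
The plan is to analyze the fixed-order error dynamics \eqref{eq:smooth_state_equation} while order $q$ stays active. For a polynomial target of degree $d$, the forcing $x_0^{(q)}(t)$ is a polynomial of degree $d-q$ with leading coefficient $a_d\,d!/(d-q)!\neq0$. Because $\mathcal A_q$ is Hurwitz by Theorem~\ref{theor:adaptive_dhic}(i), the solution on the activation interval splits as $\eta(t)=\exp(\mathcal A_q(t-t_{k_q}))\eta_h+\eta_p(t)$. Here $\eta_p$ is the unique polynomial particular solution of the same degree $d-q$.

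The particular solution can be computed explicitly from \eqref{eq:smooth_error_dynamics}. Its lowest-order balance gives $\kappa_{q-1}(L+A_0)e_p\approx-\mathbf 1x_0^{(q-1)}$ at leading order. More precisely, the top-degree coefficient of $e_p$ is
\[
-\frac{a_d\,d!}{\kappa_{q-1}(d-q+1)!}\,(L+A_0)^{-1}\mathbf 1,
\]
multiplying $t^{d-q+1}$. Since $\eta_p$ involves $e$ itself, this degree $d-q+1$ is the relevant one: at order $q$ the error $e$ grows like $t^{d-q+1}$, and for $q=d$ it tends to the nonzero constant $e_\infty=-a_d d!\,\kappa_{d-1}^{-1}(L+A_0)^{-1}\mathbf 1$. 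Because $(L+A_0)^{-1}\mathbf 1\neq0$, Lemma~\ref{lemm:smooth_monitor_equivalence} gives $J(t)\to J_\infty\ge\sqrt{\lambda_1/n}\,\|e_\infty\|>0$ when $q=d$, and $J(t)\to\infty$ polynomially when $q<d$.

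Part (i) then follows quickly. Suppose order $q\le d<q_{\max}$, or more generally $q<q_{\max}$, is never rejected. By the contraction argument in the proof of Theorem~\ref{theor:adaptive_dhic}(iii), $J(t_{k+1})\le(1-\varepsilon_\infty)J(t_k)$ for all large $k$, so $J(t_k)\to0$. This contradicts $J\to J_\infty>0$ or $J\to\infty$. (If $J(t_k)=0$ at some $k$, then $r_k=-\infty$ at the next nonzero value, which triggers a switch anyway.) Hence every such order is rejected in finite time, and since orders only increase, \eqref{eq:no_underfitting} follows.

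For part (ii), with $q=d$ the explicit form is $e(t)=e_\infty+\text{(transient)}$. The transient is bounded by $M_d e^{-\beta_d(t-t_{k_d})}\|\eta(t_{k_d})-\eta_p\|$, using the Hurwitz bound from the proof of (ii). The quantity $\|\eta(t_{k_d})\|$ can itself be bounded by the ISS estimate \eqref{eq:smooth_switched_bound} over a horizon on which $w$ is bounded; that is one point where care is needed. Using Lemma~\ref{lemm:smooth_monitor_equivalence}, and noting that $J$ is Lipschitz in $e$ with constant $\sqrt{\lambda_n}$ (each $\rho_i$ is a seminorm of $e$), we get $|J(t)-J_\infty|\le\sqrt{\lambda_n}\,\|e(t)-e_\infty\|$. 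I would define $\tau_d$ in \eqref{eq:critical_detection_time} as the time after which this deviation is at most $\delta J_\infty$, where $\delta=\varepsilon_\infty/(2+\varepsilon_\infty)$ or similar, i.e., $\tau_d=\beta_d^{-1}\log\!\big(M_d\sqrt{\lambda_n}\,\|\eta(t_{k_d})-\eta_p\|/(\delta J_\infty)\big)_+$. For any decision interval starting at or after $t_{k_d}+\tau_d$,
\[
r_k\le\frac{(1+\delta)-(1-\delta)}{1-\delta}=\frac{2\delta}{1-\delta}\le\varepsilon_\infty\le\varepsilon_k,
\]
so the order is rejected at the end of that interval. The first such decision instant is at most $t_{k_d}+\lceil\tau_d/T_d\rceil T_d$, which gives the stated bound.

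The main obstacle is making $\tau_d$ an honest explicit constant. It depends on $\eta(t_{k_d})$, which in turn depends on the whole history before the switch to order $d$. I would first try to bound it through \eqref{eq:smooth_switched_bound} with $\sup_{\tau\le t_{k_d}}\|w\|$. That is polynomially large in $t_{k_d}$ but still finite, so it yields a state-dependent yet explicit constant. I would state $\tau_d$ in terms of $\|\eta(t_{k_d}^+)-\eta_p(t_{k_d})\|$, $M_d$, $\beta_d$, $\lambda_n$, and $J_\infty$.
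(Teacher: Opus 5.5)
Your proposal is correct. Part (ii) follows the paper's proof essentially step for step: the same equilibrium $\eta_d^\star$ (your $\eta_p$), the same Hurwitz bound $M_d e^{-\beta_d(t-t_{k_d})}$, the same $\sqrt{\lambda_n}$-Lipschitz property of the monitor, the same $\tau_d$ with $\theta_d=\varepsilon_\infty/(2+\varepsilon_\infty)$, and the same estimate $r_k\le 2\theta_d/(1-\theta_d)=\varepsilon_\infty$.

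Part (i), however, uses a genuinely different argument. The paper works directly with the test statistic. For $q<d$ it expands $J(t)=\alpha_qt^m+b_qt^{m-1}+O(t^{m-2})$, which needs a second-order expansion of each $\rho_i$ and care with the maximizing index set $\mathcal I_q$, and it concludes $r_k=-mT_d/t_k+O(t_k^{-2})<0$. For $q=d$ it uses $J\to J_d^\star>0$ to get $r_k\to0<\varepsilon_\infty$.

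You argue by contradiction instead. If some $q<q_{\max}$ were never rejected, then $q_\infty=q$, and the contraction from Theorem~\ref{theor:adaptive_dhic}(iii) would force $J(t_k)$ to be nonincreasing and in fact to tend to zero. This is incompatible with $\liminf_k J(t_k)>0$, which needs only a nonzero leading coefficient of the polynomial particular solution together with the lower bound in Lemma~\ref{lemm:smooth_monitor_equivalence}.

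Your route is shorter and avoids expanding a maximum of square roots. It is also more general in the critical case. For $q<d$ it works for any positive schedule, since monotonicity already contradicts growth. For $q=d$ it needs only $\sum_k\varepsilon_k=\infty$, so it also covers $\varepsilon_\infty=0$ with $0<\alpha\le1$, whereas the paper's limit argument $r_k\to0$ genuinely needs $\varepsilon_\infty>0$. The paper's route buys quantitative information about the monitor itself: under-orders $q<d$ are detected through actual error growth at the rate $r_k\approx-mT_d/t_k$. That is the natural starting point for timing estimates beyond the critical order.

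Two corrections. First, your degree count for the particular solution is off by one. In the order-$q$ error dynamics, the undifferentiated term $\kappa_{q-1}(L+A_0)e$ balances $-\mathbf 1x_0^{(q)}$, not $-\mathbf 1x_0^{(q-1)}$. Hence $e_{\mathrm p}$ has degree $d-q$, with leading coefficient $-a_dd!\,[\kappa_{q-1}(d-q)!]^{-1}(L+A_0)^{-1}\mathbf 1$. As written, your formula would make $e$ grow linearly at $q=d$, contradicting your own (correct) limit $e_\infty$. Nothing downstream breaks, because you only use $\|e(t)\|\to\infty$ for $q<d$ and $e(t)\to e_\infty\neq\mathbf 0$ for $q=d$, and both hold with the corrected degree.

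Second, the ISS detour to bound $\eta(t_{k_d})$ is unnecessary. The statement allows $\tau_d$ to depend on $\|\eta(t_{k_d})-\eta_d^\star\|$ along the actual trajectory, which is exactly how the paper defines it. Your use of the post-switch value $\eta(t_{k_d}^+)$ is, if anything, the more precise reading.
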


\begin{IEEEproof}
    \emph{(i)} For $z\in\mathbb R^n$, define
    \begin{align}
        \mathcal J_{\mathrm m}(z)
        &\triangleq\max_{i=1,\dots,n}\rho_i(z),
        \label{eq:monitor_norm}
    \end{align}
    where $\rho_i(z)
        \triangleq
        \left[
        \frac{1}{2}\sum_{j\in\mathcal N_i}(z_i-z_j)^2
        +a_{i0}z_i^2
        \right]^{1/2}$. 
    By definition, $J(t)=\mathcal J_{\mathrm m}(e(t))$. 
    The reverse triangle inequality applied to the local quantity
    $\rho_i$ in \eqref{eq:monitor_norm} gives
    \begin{align*}
        |\rho_i(z)-\rho_i(y)|\leq\rho_i(z-y).
    \end{align*}
    Using the standard inequality
    $|\max_i a_i-\max_i b_i|\leq\max_i|a_i-b_i|$ and then taking the
    maximum of the local bounds over $i$ yields
    \begin{align}
        |\mathcal J_{\mathrm m}(z)-\mathcal J_{\mathrm m}(y)|
        &\leq\max_{i=1,\dots,n}|\rho_i(z)-\rho_i(y)|\nonumber\\
        &\leq\max_{i=1,\dots,n}\rho_i(z-y)
        =\mathcal J_{\mathrm m}(z-y).
        \label{eq:monitor_reverse_triangle}
    \end{align}
    Hence,
    \begin{align*}
        \mathcal J_{\mathrm m}^2(z-y)
        &\leq \sum_{i=1}^n
        \left[
        \frac{1}{2}\sum_{j\in\mathcal N_i}
        \big[(z_i-y_i)-(z_j-y_j)\big]^2
        +a_{i0}(z_i-y_i)^2
        \right]\\
        &=(z-y)^T(L+A_0)(z-y).
    \end{align*}
    By
    Lemma~\ref{lemm:varEig},
    \begin{align*}
        \mathcal J_{\mathrm m}^2(z-y)
        \leq (z-y)^T(L+A_0)(z-y)\leq\lambda_n\|z-y\|^2,
    \end{align*}
    which further gives
    \begin{align*}
        \mathcal J_{\mathrm m}(z-y)
        \leq\sqrt{\lambda_n}\|z-y\|.
    \end{align*}
    Substituting this bound into
    \eqref{eq:monitor_reverse_triangle} yields
    \begin{align}
        |\mathcal J_{\mathrm m}(z)-\mathcal J_{\mathrm m}(y)|
        \leq\sqrt{\lambda_n}\|z-y\|.
        \label{eq:monitor_lipschitz}
    \end{align}

    We next characterize the response of an
    insufficient fixed order. Suppose that order $q\leq d$ is activated and
    never subsequently changed. Its error dynamics are given directly by
    \eqref{eq:smooth_error_dynamics}.
    By Theorem~\ref{theor:adaptive_dhic}, the homogeneous system associated
    with~\eqref{eq:smooth_error_dynamics} is exponentially stable.
    Therefore, the error can be decomposed as
    \begin{align}
        e(t)=e_{\mathrm h}(t)+e_{\mathrm p}(t),
        \label{eq:underorder_solution_decomposition}
    \end{align}
    where $e_{\mathrm h}$ and its derivatives decay exponentially and
    $e_{\mathrm p}$ is a polynomial particular solution.

    Consider first $q<d$ and set $m\triangleq d-q\geq1$. Since the terms
    $a_rt^r$ with $r<q$ vanish after $q$ differentiations, differentiating
    the target polynomial term by term gives
    \begin{align}
        x_0^{(q)}(t)
        &=\sum_{r=q}^d
        a_r\frac{r!}{(r-q)!}t^{r-q}\nonumber\\
        &=a_d\frac{d!}{(d-q)!}t^{d-q}
        +\sum_{r=q}^{d-1}
        a_r\frac{r!}{(r-q)!}t^{r-q}\nonumber\\
        &=\frac{a_dd!}{m!}t^m+O(t^{m-1}).
        \label{eq:target_derivative_leading_term}
    \end{align}
    Since $e_{\mathrm p}(t)\in\mathbb R^n$, write the polynomial particular
    solution as
    \begin{align}
        e_{\mathrm p}(t)=c_mt^m+c_{m-1}t^{m-1}
        +\cdots+c_0,\qquad c_0,\dots,c_m\in\mathbb R^n.
        \label{eq:underorder_particular_polynomial}
    \end{align}
    After substituting $e_{\mathrm p}(t)$ into
    \eqref{eq:smooth_error_dynamics}, the term $e_{\mathrm p}^{(q)}(t)$ and
    the terms
    $\kappa_l(L+A_0)e_{\mathrm p}^{(q-1-l)}(t)$ for
    $l=0,\dots,q-2$ contain at least one derivative of
    $e_{\mathrm p}(t)$ and therefore have degree at most $m-1$. The remaining
    term, corresponding to $l=q-1$, contains no derivative. Consequently, its
    leading component $\kappa_{q-1}(L+A_0)c_mt^m$ is the only term of degree
    $m$ on the left-hand side. By
    \eqref{eq:target_derivative_leading_term}, the coefficient of $t^m$ on
    the right-hand side of \eqref{eq:smooth_error_dynamics} is
    $-a_dd!\mathbf1/m!$. Equality of the two polynomial expressions therefore
    requires their coefficients of $t^m$ to satisfy
    \begin{align}
        \kappa_{q-1}(L+A_0)c_m
        =-\frac{a_dd!}{m!}\mathbf1.
        \label{eq:underorder_leading_balance}
    \end{align}
    Multiplying both sides of \eqref{eq:underorder_leading_balance} by
    $(L+A_0)^{-1}/\kappa_{q-1}$ gives
    \begin{align}
        c_m=v_q\triangleq
        -\frac{a_dd!}{m!\kappa_{q-1}}(L+A_0)^{-1}\mathbf1.
        \label{eq:underorder_leading_coefficient}
    \end{align}
    Since $a_d\neq0$ and $L+A_0$ is nonsingular, $v_q\neq\mathbf{0}$.
    Substituting \eqref{eq:underorder_particular_polynomial} and
    $c_m=v_q$ from \eqref{eq:underorder_leading_coefficient} into the solution
    decomposition~\eqref{eq:underorder_solution_decomposition} gives
    \begin{align*}
        e(t)&=e_{\mathrm h}(t)+v_qt^m+c_{m-1}t^{m-1}
        +\sum_{r=0}^{m-2}c_rt^r.
    \end{align*}
    Here, the last sum is interpreted as zero when $m=1$ and is therefore
    $O(t^{m-2})$ for every $m\geq1$. Moreover, exponential stability implies
    $e_{\mathrm h}(t)=o(t^{m-2})$, because exponential decay dominates every
    polynomial power. Therefore,
    \begin{align}
        e(t)=v_qt^m+c_{m-1}t^{m-1}+O(t^{m-2}).
        \label{eq:underorder_error_second_order}
    \end{align}
    Equation~\eqref{eq:underorder_leading_balance} shows that
    $[(L+A_0)v_q]_i\neq0$ for every $i$. Hence $\rho_i(v_q)>0$; otherwise,
    all the relative and target-access terms defining $\rho_i(v_q)$ would
    vanish, which implies $[(L+A_0)v_q]_i=0$. Define
    \begin{align*}
        a_{q,i}&\triangleq\rho_i(v_q)>0,\\
        b_{q,i}&\triangleq\frac{1}{a_{q,i}}
        \Bigg[\frac{1}{2}\sum_{j\in\mathcal N_i}
        (v_{q,i}-v_{q,j})
        \big((c_{m-1})_i-(c_{m-1})_j\big)\nonumber\\
        &\hspace{42mm}+a_{i0}v_{q,i}(c_{m-1})_i\Bigg].
    \end{align*}
    Substituting \eqref{eq:underorder_error_second_order} into the definition
    of $\rho_i$ in \eqref{eq:monitor_norm} and collecting equal powers of
    $t$ gives
    \begin{align*}
        \rho_i^2(e(t))
        =a_{q,i}^2t^{2m}
        +2a_{q,i}b_{q,i}t^{2m-1}
        +O(t^{2m-2}).
    \end{align*}
    Since $a_{q,i}>0$, taking the square root and using
    $(1+s)^{1/2}=1+s/2+O(s^2)$ as $s\to0$ yields
    \begin{align*}
        \rho_i(e(t))
        &=a_{q,i}t^m
        \left[
        1+\frac{2b_{q,i}}{a_{q,i}t}+O(t^{-2})
        \right]^{1/2}\\
        &=a_{q,i}t^m
        \left[
        1+\frac{b_{q,i}}{a_{q,i}t}+O(t^{-2})
        \right]\\
        &=a_{q,i}t^m+b_{q,i}t^{m-1}+O(t^{m-2}).
    \end{align*}
    Define the largest leading coefficient and the set of indices attaining
    it by
    \begin{align*}
        \alpha_q&\triangleq\max_{i=1,\dots,n}a_{q,i}
        =\mathcal J_{\mathrm m}(v_q)>0,\\
        \mathcal I_q&\triangleq
        \{i\in\{1,\dots,n\}:a_{q,i}=\alpha_q\}.
    \end{align*}
    For every $i\notin\mathcal I_q$, the positive gap
    $\alpha_q-a_{q,i}$ multiplies the dominant term $t^m$; hence such an
    index cannot maximize $\rho_i(e(t))$ for all sufficiently large $t$.
    Thus, for all sufficiently large $t$, only indices in $\mathcal I_q$ can
    attain the maximum. Defining
    $b_q\triangleq\max_{i\in\mathcal I_q}b_{q,i}$, and using the fact that
    $\mathcal I_q$ is finite, we obtain
    \begin{align}
    \label{eq:underorder_monitor_expansion}
        J(t)
        &=\max_{i\in\mathcal I_q}
        \left[
        \alpha_qt^m+b_{q,i}t^{m-1}+O(t^{m-2})
        \right] \nonumber \\
        &=\alpha_qt^m+b_qt^{m-1}+O(t^{m-2}).
    \end{align}
    Factoring the leading term in
    \eqref{eq:underorder_monitor_expansion} gives
    \begin{align*}
        J(t_k)
        &=\alpha_qt_k^m
        \left[1+\frac{b_q}{\alpha_qt_k}+O(t_k^{-2})\right],\\
        J(t_k+T_d)
        &=\alpha_qt_k^m
        \left[1+\frac{mT_d+b_q/\alpha_q}{t_k}
        +O(t_k^{-2})\right].
    \end{align*}
    Substituting these two expressions into the relative-reduction test
    \eqref{eq:smooth_relative_reduction} and using
    $(1+s)^{-1}=1-s+O(s^2)$ as $s\to0$ yields
    \begin{align}
        r_k
        &=1-\frac{J(t_k+T_d)}{J(t_k)}\nonumber\\
        &=1-
        \frac{1+(mT_d+b_q/\alpha_q)/t_k+O(t_k^{-2})}
        {1+b_q/(\alpha_qt_k)+O(t_k^{-2})}\nonumber\\
        &=1-
        \left[1+\frac{mT_d+b_q/\alpha_q}{t_k}+O(t_k^{-2})\right]
        \left[1-\frac{b_q}{\alpha_qt_k}+O(t_k^{-2})\right]
        \nonumber\\
        &=-\frac{mT_d}{t_k}+O(t_k^{-2}).
        \label{eq:growing_residual_reduction}
    \end{align}
    Thus, $r_k<0$ for all sufficiently large $k$. Since
    $\varepsilon_k>0$, an order $q<d$ cannot remain active indefinitely
    unless it has reached the prescribed maximum order.

    Now consider the critical under-order $q=d$. In this case,
    $x_0^{(d)}(t)=a_dd!$ is constant. Setting $q=d$ in
    \eqref{eq:smooth_error_dynamics} gives
    \begin{align*}
        e^{(d)}(t)
        +\sum_{l=0}^{d-1}\kappa_l(L+A_0)e^{(d-1-l)}(t)
        =-a_dd!\mathbf1.
    \end{align*}
    For a constant particular solution $e(t)=e_d^\star$, all positive-order
    derivatives of $e_d^\star$ vanish. Hence every term in the sum with
    $l<d-1$ is zero, while the term with $l=d-1$ contains
    $e^{(0)}(t)=e_d^\star$. The remaining algebraic equation is therefore
    \begin{align*}
        \kappa_{d-1}(L+A_0)e_d^\star=-a_dd!\mathbf1,
    \end{align*}
    and hence
    \begin{align}
        e_d^\star
        =-\frac{a_dd!}{\kappa_{d-1}}(L+A_0)^{-1}\mathbf1.
        \label{eq:critical_residual}
    \end{align}
    Since $a_d\neq0$ and $L+A_0$ is nonsingular,
    $e_d^\star\neq\mathbf{0}$. Because $\mathcal J_{\mathrm m}$ is a norm,
    the corresponding steady monitor value
    $J_d^\star\triangleq\mathcal J_{\mathrm m}(e_d^\star)$ is strictly
    positive. If order $d$ were to remain active indefinitely, the solution
    would be the sum of $e_d^\star$ and an exponentially decaying homogeneous
    response. Consequently,
    \begin{align}
        e(t)\to e_d^\star,\qquad
        J(t)\to J_d^\star>0.
        \label{eq:critical_monitor_limit}
    \end{align}
    Thus, the two endpoint values used to compute $r_k$ converge to the same
    positive limit, which implies $r_k\to0$. Since
    $\varepsilon_k\geq\varepsilon_\infty>0$, the inequality
    $r_k\leq\varepsilon_k$ must hold after finitely many decision intervals.
    The switching rule then rejects order $d$ whenever $d<q_{\max}$,
    contradicting indefinite activation. Applying this conclusion together
    with the result for $q<d$ successively to every active order
    $q\leq d$ proves~\eqref{eq:no_underfitting}.

    \emph{(ii)} It remains to establish the stated quantitative
    rejection-time bound for the critical under-order. By definition,
    $t_{k_d}$ is the time at which order $d$ becomes active; in
    particular, $k_d=0$ when $d=1$. While order $d$ is active, the
    derivative-state vector is
    \begin{align*}
        \eta(t)=
        \begin{bmatrix}
            e^T(t)&[e^{(1)}(t)]^T&\cdots&[e^{(d-1)}(t)]^T
        \end{bmatrix}^T.
    \end{align*}
    At the constant particular solution $e(t)=e_d^\star$, all derivatives
    $e^{(r)}(t)$, $r=1,\dots,d-1$, are zero. The corresponding equilibrium
    of the derivative-state dynamics is therefore
    \begin{align*}
        \eta_d^\star
        \triangleq
        \begin{bmatrix}
            (e_d^\star)^T&\mathbf{0}^T&\cdots&\mathbf{0}^T
        \end{bmatrix}^T.
    \end{align*}
    Equation~\eqref{eq:critical_residual} ensures that
    $\mathcal A_d\eta_d^\star+\mathcal B_da_dd!=\mathbf{0}$, so
    $\eta_d^\star$ is indeed an equilibrium of
    \eqref{eq:smooth_state_equation} for the constant input
    $x_0^{(d)}=a_dd!$.
    Choose $M_d\geq1$ and $\beta_d>0$ such that
    \begin{align*}
        \|\exp(\mathcal A_dt)\|
        \leq M_d\exp(-\beta_dt),\qquad t\geq0,
    \end{align*}
    which is possible because $\mathcal A_d$ is Hurwitz. Subtracting
    $\eta_d^\star$ from the state equation gives the exact homogeneous
    relation
    \begin{align*}
        \eta(t)-\eta_d^\star
        =\exp[\mathcal A_d(t-t_{k_d})]
        [\eta(t_{k_d})-\eta_d^\star].
    \end{align*}
    It follows from the definition of $M_d$ and $\beta_d$ that
    \begin{align}
        \|\eta(t)-\eta_d^\star\|
        \leq
        M_d\exp[-\beta_d(t-t_{k_d})]
        \|\eta(t_{k_d})-\eta_d^\star\|.
        \label{eq:critical_state_decay}
    \end{align}
    Combining
    \eqref{eq:critical_state_decay} with
    \eqref{eq:monitor_lipschitz} gives
    \begin{align}
        |J(t)-J_d^\star|
        &\leq
        \sqrt{\lambda_n}M_d
        \exp[-\beta_d(t-t_{k_d})]
        \|\eta(t_{k_d})-\eta_d^\star\|.
        \label{eq:critical_monitor_decay}
    \end{align}
    Define
    \begin{align}
        \tau_d
        \triangleq\frac{1}{\beta_d}
        \left[
        \log
        \frac{
        (2+\varepsilon_\infty)
        \sqrt{\lambda_n}M_d
        \|\eta(t_{k_d})-\eta_d^\star\|}
        {\varepsilon_\infty J_d^\star}
        \right]_+,
        \label{eq:critical_detection_time}
    \end{align}
    Define
    \begin{align*}
        \theta_d\triangleq
        \frac{\varepsilon_\infty}{2+\varepsilon_\infty}.
    \end{align*}
    For every $t\geq t_{k_d}+\tau_d$, we have
    \begin{align*}
        &\sqrt{\lambda_n}M_d
        \exp[-\beta_d(t-t_{k_d})]
        \|\eta(t_{k_d})-\eta_d^\star\|\\
        &\quad\leq
        \sqrt{\lambda_n}M_d
        \exp(-\beta_d\tau_d)
        \|\eta(t_{k_d})-\eta_d^\star\|.
    \end{align*}
    If
    \begin{align*}
        \sqrt{\lambda_n}M_d
        \|\eta(t_{k_d})-\eta_d^\star\|
        \leq\theta_dJ_d^\star,
    \end{align*}
    then the logarithm in~\eqref{eq:critical_detection_time} is nonpositive,
    so $\tau_d=0$ and
    \begin{align*}
        &\sqrt{\lambda_n}M_d
        \exp(-\beta_d\tau_d)
        \|\eta(t_{k_d})-\eta_d^\star\|\\
        &\qquad=
        \sqrt{\lambda_n}M_d
        \|\eta(t_{k_d})-\eta_d^\star\|
        \leq\theta_dJ_d^\star.
    \end{align*}
    Otherwise,
    \begin{align*}
        \tau_d
        =\frac{1}{\beta_d}
        \log\frac{\sqrt{\lambda_n}M_d
        \|\eta(t_{k_d})-\eta_d^\star\|}
        {\theta_dJ_d^\star},
    \end{align*}
    and direct substitution gives
    \begin{align*}
        \sqrt{\lambda_n}M_d
        \exp(-\beta_d\tau_d)
        \|\eta(t_{k_d})-\eta_d^\star\|=\theta_dJ_d^\star.
    \end{align*}
    Hence, in either case, the exponential bound is no greater than
    $\theta_dJ_d^\star$.
    Therefore, \eqref{eq:critical_monitor_decay} yields
    \begin{align}
        |J(t)-J_d^\star|
        \leq\theta_dJ_d^\star,
        \qquad t\geq t_{k_d}+\tau_d.
        \label{eq:critical_monitor_relative_bound}
    \end{align}

    At $t=t_k$, \eqref{eq:critical_monitor_relative_bound} gives
    \begin{align*}
        J(t_k)
        \geq J_d^\star-|J(t_k)-J_d^\star|\geq(1-\theta_d)J_d^\star.
    \end{align*}
    Applying the same estimate at $t=t_{k+1}$ and using the triangle
    inequality yields
    \begin{align*}
        J(t_k)-J(t_{k+1})
        &=J(t_k)-J_d^\star
        +J_d^\star-J(t_{k+1})\\
        &\leq |J(t_k)-J_d^\star|
        +|J(t_{k+1})-J_d^\star|\\
        &\leq2\theta_dJ_d^\star.
    \end{align*}
    Therefore,
    \begin{align*}
        r_k
        \leq\frac{2\theta_d}{1-\theta_d}
        =\varepsilon_\infty
        \leq\varepsilon_k,
    \end{align*}
    and the switching rule rejects order $d$ at $t_{k+1}$. The first
    decision instant not smaller than $t_{k_d}+\tau_d$ is at most
    $t_{k_d}+\lceil\tau_d/T_d\rceil T_d$. Therefore, the rejection time
    $t_{\mathrm{rej},d}$ satisfies
    \begin{align}
        t_{\mathrm{rej},d}
        \leq
        t_{k_d}+
        \left(\left\lceil\frac{\tau_d}{T_d}\right\rceil+1\right)T_d.
        \label{eq:critical_switch_bound}
    \end{align}
\end{IEEEproof}

    Theorem~\ref{theor:order_detection} formalizes the distinction between
    OADIC and a fixed-order controller. A fixed order $q\leq d$ permanently
    accepts the residual generated by the unmodeled target derivative,
    whereas OADIC rejects that order in finite time. The critical order
    $q=d$ is the slowest deficiency to detect: its error remains bounded and
    converges to the nonzero value~\eqref{eq:critical_residual}. The bound
    \eqref{eq:critical_switch_bound} shows that weak connectivity, slow modal
    decay, a small threshold, or a large reset transient can all delay the
    detection. 

\begin{figure*}[!t]
    \centering
    \includegraphics[width=\textwidth]{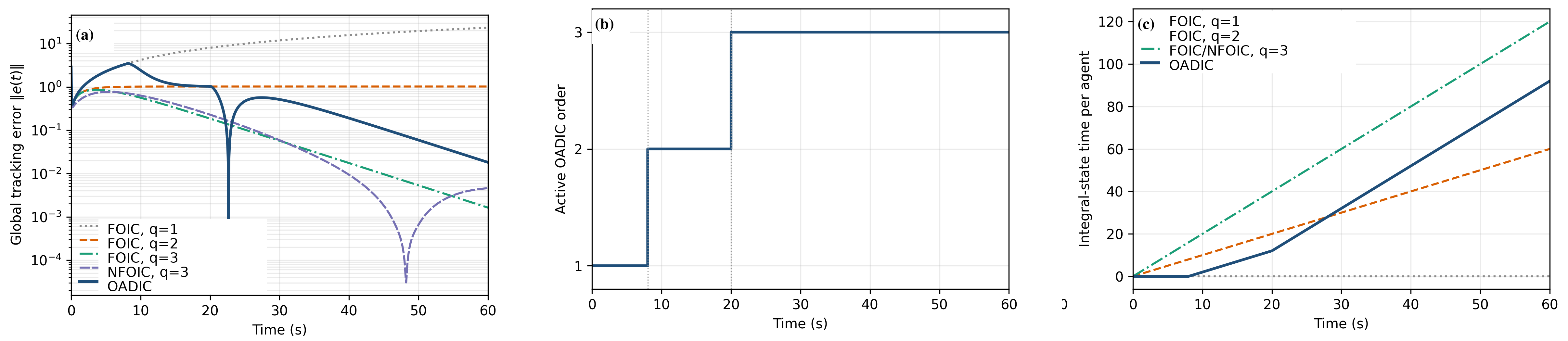}
    \caption{Ten-agent comparison for a quadratic target: (a) global tracking
    errors; (b) order selected by OADIC; and (c) cumulative integral-state
    time per agent.}
    \label{fig:oadic_comparison}
\end{figure*}

\section{Simulation}

In this section, we compare OADIC with both fixed-order linear controllers
and a nonlinear controller. We consider ten agents connected by an
undirected ring augmented with the links $\{1,6\}$ and $\{3,8\}$; hence,
\begin{align*}
    \mathcal E={}&\{\{1,2\},\{2,3\},\ldots,\{9,10\},\{10,1\},\{1,6\},\{3,8\}\}.
\end{align*}
Agents 1, 5, and 8 have access to the target. We choose $\bar d_i=i-5.5$ for $i=1,\ldots,10,$ and $x_0(t)=0.12t^2+0.35t$.
We then set $\bar\zeta_{ij}=\bar d_i-\bar d_j$ on every communication edge
and $\bar\zeta_{i0}=\bar d_i$ for each target-informed agent. Thus, the
desired displacements satisfy the consistency condition in
Theorem~\ref{theor:consistency}. 
We set $q_{\max}=3$ and use the nested gain sequence $\kappa_0=8$, $\kappa_1=3$, $\kappa_2=0.25.$
For this network, $\lambda_1=0.2455$. Therefore,
$\kappa_1<\lambda_1\kappa_0^2/4=3.9285$ and
$\kappa_2<\kappa_1^2/(4\kappa_0)=0.2813$, so the gain conditions in
\eqref{eq:smooth_gain_design} hold. OADIC uses $T_d=4$, $\varepsilon_0=0.22$,
    $\varepsilon_\infty=0.03$, and $\alpha=0.6.$
As linear benchmarks, we use fixed-order integral control (FOIC) with
$q=1,2,3$ and the same gains as OADIC.

We also define a nonlinear-enhanced FOIC (NFOIC) by augmenting the full
order-three FOIC with the fixed-time-type
signed-power feedback commonly used in nonlinear and distributed finite-time
control~\cite{polyakov2012fixed,parsegov2013fixed}:
\begin{align}
    u_i^{\mathrm{NL}}
    ={}&-\sum_{l=0}^{2}\kappa_l\sigma_{il}
    -\gamma_1\operatorname{sig}^{\mu}(\sigma_{i0})
    -\gamma_2\operatorname{sig}^{\nu}(\sigma_{i0}),
    \label{eq:simulation_nonlinear_controller}
\end{align}
where $\operatorname{sig}^{p}(z)\triangleq
|z|^p\operatorname{sign}(z)$, $\mu=0.6$, $\nu=1.4$,
$\gamma_1=1.2$, and $\gamma_2=0.12$. Thus, NFOIC receives
the complete order-three internal model from the outset and further
strengthens its transient feedback through one sublinear and one superlinear
power. This choice favors NFOIC in tracking performance
and allows us to compare that performance with the structural simplicity and
online state allocation of OADIC. We simulate every controller over
$T=60$ using a fixed-step fourth-order Runge--Kutta method with a step size
of $0.002$.

To quantify controller complexity, define the average number of active
integral states per agent over the simulation horizon $[0,T]$ as
\begin{align}
    \bar c(T)\triangleq
    \frac{1}{T}\int_0^T[q(t)-1]\mathrm dt.
    \label{eq:simulation_average_complexity}
\end{align}
Thus, the fixed orders $q=1,2,3$ have $\bar c=0,1,2$, respectively, and
NFOIC has $\bar c=2$. We also calculate the tail
root-mean-square error
\begin{align}
    E_{\mathrm{tail}}
    \triangleq
    \left[
    \frac{1}{0.2T}\int_{0.8T}^{T}\|e(t)\|^2\mathrm dt
    \right]^{1/2}.
    \label{eq:simulation_tail_rms}
\end{align}

\begin{table}[!t]
    \caption{Tracking--complexity--structure comparison over $T=60$ s}
    \label{tab:oadic_comparison}
    \centering
    \begin{tabular}{lccc}
        \hline
        Controller & $E_{\mathrm{tail}}$ & $\bar c(T)$ & Feedback\\
        \hline
        FOIC, $q=1$ & $21.0822$ & $0$ & Linear\\
        FOIC, $q=2$ & $1.0211$ & $1$ & Linear\\
        FOIC, $q=3$ & $0.0039$ & $2$ & Linear\\
        NFOIC, $q=3$ & $0.0029$ & $2$ & Signed power\\
        OADIC & $0.0431$ & $1.5333$ & Linear by order\\
        \hline
    \end{tabular}
\end{table}

Figure~\ref{fig:oadic_comparison}(a) first confirms the role of controller
order. The proportional controller cannot follow the accelerating target,
whereas the order-two controller removes the velocity-dependent component
but retains an acceleration-induced residual. By contrast, both order-three
controllers contain the required internal model and drive the error toward
zero. The additional signed-power terms give NFOIC the
smallest tail error, but they do not reduce its prescribed order or its number
of dynamic states.

As shown in Fig.~\ref{fig:oadic_comparison}(b), OADIC initially retains
proportional feedback, activates the first integral state at $t=8$, and
activates the second at $t=20$. Thereafter, it has the same order-three
internal model as the full-order FOIC and NFOIC, and its error converges
toward zero. Because OADIC delays both activations, its finite-horizon tail
error is larger than those of the controllers that use the complete
order-three structure from $t=0$. Thus, OADIC trades some transient speed for
online complexity allocation.

The benefit of this tradeoff appears in Fig.~\ref{fig:oadic_comparison}(c)
and Table~\ref{tab:oadic_comparison}. OADIC uses no integral state over
$[0,8)$, one per agent over $[8,20)$, and two thereafter. Consequently,
$\bar c(60)=(0\times8+1\times12+2\times40)/60\approx1.5333$. Relative to
either full-order benchmark, the corresponding reduction is
$(2-1.5333)/2\times100\%=23.3\%$. Likewise, the tail RMS error reduction
relative to the fixed order-two controller is
$(1.0211-0.0431)/1.0211\times100\%=95.8\%$.
Therefore, nonlinear enhancement can improve the transient when the complete
internal model is already available, but it does so while retaining all
high-order states and introducing fractional-power computations. OADIC takes
a different route: it keeps every active controller linear and approaches the
required high-order accuracy while activating the corresponding states only
when the observed tracking performance demands them.

\section{Conclusion}

We developed OADIC for distributed dynamic coordination when the required
internal-model order is unknown. Instead of fixing a conservative high order
or introducing nonlinear feedback, we start with proportional feedback and
add integral states only when the observed tracking progress is insufficient.
We thereby move among nested linear controllers according to actual need
while keeping the control law purely linear at every active order. Moreover,
we preserve the existing states and gains during each order transition and
initialize the new state to maintain continuity of the control input.

We also established the main guarantees behind this adaptation mechanism.
First, we derived a cycle-consistency condition that determines whether the
prescribed relative displacements define a feasible coordination pattern.
Next, we constructed a nested gain condition that stabilizes every admissible
fixed-order subsystem. We then established a uniform input-to-state stability
bound that accounts for both target motion and order transitions. Furthermore,
we proved that OADIC rejects every insufficient
nonsaturated order after finitely many decision intervals, even when the
corresponding fixed-order subsystem is stable but leaves a nonzero tracking
error.

Finally, we used numerical comparisons to illustrate the resulting
accuracy--complexity tradeoff. Compared with the fixed order-three controller,
we reduced integral-state time by $23.3\%$ while approaching its tracking
accuracy. Compared with the fixed order-two controller, we reduced the tail
RMS error by approximately $95.8\%$. NFOIC
achieved a smaller finite-horizon tail error, but it retained both integral
states throughout the operation and introduced signed-power feedback at every
agent. Therefore, these results do not suggest that OADIC uniformly
outperforms full-order controllers in transient speed. Instead, they show
that we can obtain the necessary tracking capability with purely linear
active controllers and without carrying the full order throughout the entire
operation. In future work, we will study robustness to noisy measurements and
changing network topologies.

\bibliographystyle{IEEEtran}
\bibliography{refs}

\begin{IEEEbiography}[{\includegraphics[width=1in,height=1.25in,clip,
keepaspectratio]{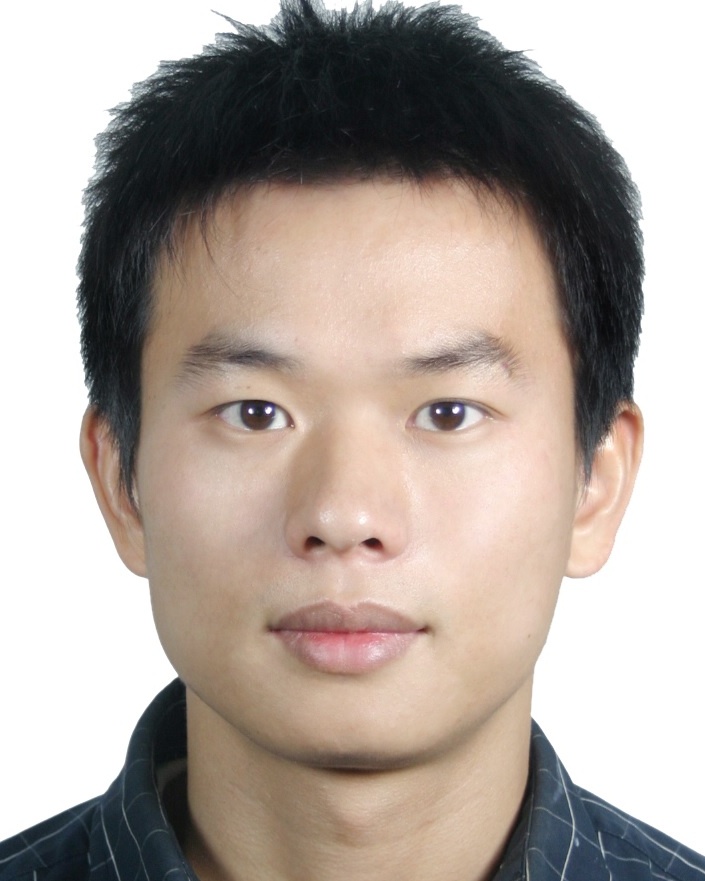}}]{Fei Chen}
(M'12--SM'15) received the Ph.D. degree in control theory and control
engineering from Nankai University, Tianjin, China, in 2009. He is currently
a Professor and Associate Dean of the College of Artificial Intelligence,
Nankai University. From 2009 to 2010, he was a Postdoctoral Researcher with
Utah State University, Logan, UT, USA. He subsequently served as an Associate
Professor at Xiamen University, Xiamen, China, and a Professor at Northeastern
University, Shenyang, China. He has also held visiting appointments at the
University of California, Riverside, CA, USA, from 2017 to 2018, and at the
City University of Hong Kong, Hong Kong, China, in 2008, 2013, 2016, and
2024. Dr. Chen is the author of \emph{Distributed Average Tracking in Multi-Agent
Systems} (Springer, 2020) and has published more than 100 peer-reviewed papers
in international journals and conference proceedings. His research interests include
distributed coopearative control, distributed optimization, and distributed learning.
\end{IEEEbiography}

\end{document}